\newtheorem{theorem}{Theorem}[section]
\newtheorem{lemma}[theorem]{Lemma}
\newtheorem{proposition}[theorem]{Proposition}
\newtheorem{definition}[theorem]{Definition}
\newtheorem{claim}[theorem]{Claim}
\def\th@plain{%
  \thm@notefont{}
  \itshape 
}
\def\th@definition{%
  \thm@notefont{}
  \normalfont 
}
\newcommand{\f}{\varphi}
\newcommand{\ff}{\psi}
\newcommand{\To}{\Rightarrow}
\newcommand{\tot}{\leftrightarrow}
\newcommand{\relstruct}[1]{\boldsymbol{#1}} 
\newcommand{\logic}[1]{\bm{\mathsf{#1}}}
\let\rs\relstruct
\let\ax\system
\newcommand{\vp}{\varphi}
  \title{Relevant Reasoners in a Classical World}
  \author[1]{Igor Sedlár}
  \author[2]{Pietro Vigiani}
  \affil[1]{Czech Academy of Sciences, Institute of Computer Science}
  \affil[2]{Scuola Normale Superiore, Department of Philosophy}
\begin{document}

\maketitle

  \begin{abstract}
  We develop a framework for epistemic logic that combines relevant modal logic with classical propositional logic. In our framework the agent is modelled as reasoning in accordance with a relevant modal logic while the propositional fragment of our logics is classical. In order to achieve this feature, we modify the relational semantics for relevant modal logics so that validity in a model is defined as satisfaction throughout a set of designated states that, as far as propositional connectives are concerned, behave like classical possible worlds. The main technical result of the paper is a modular completeness theorem parametrized by the relevant modal logic formalizing the agent's reasoning.
  
  \paragraph{Keywords:} Epistemic logic, logical omniscience, non-normal worlds, relevant logic.
  \end{abstract}
 
 \section{Introduction}
 
 The \emph{logical omniscience problem} consists in a discrepancy between properties of modal operators in normal modal epistemic logics on one hand and intuitions concerning epistemic attitudes of real-life agents on the other hand. More specifically, if $\f_1 \land \ldots \land \f_n \to \ff$ is valid in a normal modal logic, for some $n \geq 0$, then so is $\Box \f_1 \land \ldots \land \Box \f_n \to \Box \ff$, meaning informally that epistemic attitudes of moderately idealised agents represented by $\Box$ are closed under local consequence from arbitrary finite (possibly empty) sets of assumptions. As an assumption about  agents with bounded memory, reasoning capacity etc., this is clearly unrealistic.
 
 A number of alternative approaches to epistemic logic that avoid the logical omniscience problem have been proposed. One group of approaches, the \emph{non standard states approaches}, consists in enriching Kripke models with so-called non-standard states in which the behaviour of logical connectives may differ from their behaviour in standard states (possible worlds). Closure of epistemic attitudes under logical consequence is avoided by (i) defining consequence in terms of standard states, but (ii) allowing the epistemic accessibility relation to connect standard states with non-standard ones. One of the first approaches of this kind is \cite{Rantala1982a}; see also \cite{Wansing1990} for a generalization.
 
 In the most general versions of the non-standard states approach, agents' epistemic attitudes are represented as lacking any kind of logical regimentation. It seems more realistic to assume closure under consequence of some weak \emph{non-classical logic} instead of Scylla of logical anarchism and Charybdis of closure under normal modal logic. In this vein, Levesque \cite{Levesque1984} has advocated an epistemic logic based on the logic $\logic{FDE}$ of First Degree Entailment, an implication-free fragment of the relevant logic $\logic{E}$; see also \cite{Lakemeyer1987} for a version allowing nesting of modal operators. In this logic, closure of epistemic attitudes under some problematic principles involving negation is avoided; for instance, agents are allowed to have inconsistent yet non-trivial beliefs. Validity is defined in terms of standard possible worlds and so the propositional fragment of Levesque's logic is classical. The general motivation behind this approach is that Boolean connectives seem to correctly represent the logic of agent-independent facts, while agents' reasoning is regimented by a weaker non-classical logic. This approach will be followed in this paper.
 
 Levesque acknowledges that an extension of his formalism with a non-standard implication connective would be desirable to properly formalize epistemic attitudes towards implicational statements. Fagin et al.~\cite{FaginEtAl1995a} study an extension of $\logic{FDE}$ with epistemic modal operators and material implication, thus providing some means to formalize attitudes towards implicational statements, but the well known discrepancy between properties of material implication and the intuitive properties of implicational statements renders their formalization problematic. 
 
 Relevant modal logic including a relevant implication connective is used as a basis for epistemic logic in \cite{BilkovaEtAl2016,BilkovaEtAl2010}. As is usual in relevant logic, this framework does not take standard states to be classical possible worlds. Hence, it is not an option if one intends to extend classical propositional logic with epistemic modal operators regimented by a relevant logic. Such an option is provided in \cite{Sedlar2015,Sedlar2016a} where combinations of classical propositional logic with relevant modal logic are studied. The semantics of \cite{Sedlar2016a} is two-sorted and the Hilbert-style axiomatizations provided in the paper use a meta-rule of inference, which is inconvenient if the underlying relevant logic is undecidable. The semantics of \cite{Sedlar2015} is one-sorted, thus more elegant, but the Hilbert-style axiomatizations provided in that paper have two peculiar features: proofs are defined non-standardly as \emph{pairs} of finite sequences of formulas, and the completeness proof relies on the presence of extensional truth constants $\top, \bot$ in the language.
 
 In this paper we provide a framework for epistemic logic based on relevant modal logic that avoids the problematic features of the previous approaches. The crucial features of our approach are that (i) closure of epistemic attitudes under normal modal consequence is avoided in all forms (that is, taking into account zero, one or more assumptions) but (ii) epistemic attitudes are represented as being regimented by relevant modal logic while (iii) the non-modal fragments of our epistemic logics remain classical; (iv) our use of full relevant logic, including a relevant implication connective, enables a more realistic representation of attitudes towards implicational statements; and (v) our main technical result, a modular completeness theorem parametrized by the relevant modal logic governing the agents' reasoning, is stated in terms of a standard Hilbert-style proof system, and to obtain the result we do not need to assume the presence of extensional truth constants in the language.
 
 The paper is structured as follows. In Section \ref{sec:relevant_modal} we give some background information on relevant modal logic and in Section \ref{sec:motivation} we outline the general strategy behind our framework. The framework itself is introduced in Section \ref{sec:framework} and the axiomatization results are obtained in Section \ref{sec: axiomatisation}. The concluding section summarizes the paper and discusses some attractive topics for future research. Proofs of some of our results are given in the appendix.

\section{Relevant modal logics}\label{sec:relevant_modal}

This section gives some background information on relevant modal logic. We build on the approach of Fuhrmann \cite{Fuhrmann1988,Fuhrmann1990}, with the difference that relevant modal logics discussed here are \textit{bimodal}, with a pair of modal operators $\Box$ and $\Box_L$. While $\Box$ will be seen as an epistemic modal operator, the auxiliary operator $\Box_L$ is introduced to capture ``validity in relevant logic'' in a specific technical sense that will be discussed later.

\begin{definition}
The \emph{modal language $\mathcal{L}$} contains a countable set $Pr$ of propositional variables and operators $\land, \lor, \to$ (binary), and $\neg, \Box, \Box_L$ (unary). The set of formulas of this language is defined as usual and denoted as $Fm_{\mathcal{L}}$. We define $\f \tot \ff := (\f \to \ff) \land (\ff \to \f)$.
\end{definition}
The propositional operators $\land, \lor, \neg$ and $\to$ are read as usual. The modal operator $\Box$ is read (with a contextually fixed agent in mind) as ``the agent believes that ...'', while the operator $\Box_L$ is auxiliary. We note that we stick to a language with one epistemic modal operator $\Box$, instead of multiple $\Box_i$ for agents $i \in G$, only for the sake of simplicity and that the extension of our framework to the multi-agent case is trivial.

Before defining the semantics for the modal language, we introduce some simplifying notation. Let $(S_1, \leq_1)$ and $(S_2, \leq_2)$ be two partially ordered sets. If $k_1, \ldots, k_n, k_{n+1} \subseteq \{ \downarrow, \uparrow \}$, then an $n$-ary function $f$ from $(S_1, \leq_1)$ to $(S_2, \leq_2)$ is said to be of type $k_1\ldots k_n \mapsto k_{n+1}$ iff
$$\bigwedge_{i \leq n} \big( s_i Z_i t_i \big) \implies  f(s_1, \ldots, s_n) Z_{n+1} f(t_1, \ldots, t_n) $$ where $Z_i = \mathord{\leq}$ in case $k_i = \mathord{\uparrow}$ and $Z_i = \mathord{\geq}$ in case $k_i = \mathord{\downarrow}$. We denote as $S_1(k_1 \ldots k_n, S_2(k_{n+1}))$ the set of $n$-ary functions from $S_1$ to $S_2$ of type $k_1 \ldots k_n \mapsto k_{n+1}$. As a special case, $n$-ary relations on $(S, \leq)$ are $n$-ary operations from $(S, \leq)$ to $T = (\{ \textit{true}, \textit{false} \}, \sqsubseteq )$, where it is assumed that $\textit{false} \sqsubseteq \textit{true}$. For example, $S(\uparrow, T(\uparrow))$ denotes the set of all subsets of $S$ that are closed upwards under $\leq$; $S(\downarrow\uparrow, T(\uparrow))$ denotes the set of binary relations on $S$ that are anti-monotonic in the first position and monotonic in the second position; and $S(\uparrow, S(\downarrow))$ denotes the set of anti-monotonic unary functions on $S$. We will usually omit $T(\uparrow)$; hence $S(\uparrow)$ means $S(\uparrow, T(\uparrow))$. If $B$ is a binary relation on $S$, then $B(s)$ denotes the set $\{ t \mid Bst \}$, and if $X \subseteq S$, then $B(X) := \bigcup_{s \in X} B(s)$.

\begin{definition}
A \emph{frame} is a tuple $F = (S, \leq, R, *, Q, Q_L)$ where $(S, \leq)$ is a partially ordered set, $R \in S(\downarrow\downarrow\uparrow)$, $*  \in S(\uparrow, S(\downarrow))$ and $Q, Q_L \in S(\downarrow\uparrow)$. A \emph{model} based on a frame $F$ is $M = (F, V)$ where $V: Pr \to S(\uparrow)$.
\end{definition}

We will consider later on structures that expand frames by additional relations on $S$ and we will apply the terminology defined for frames to these structures.

\begin{definition}
For each frame $F$, we define the following operations on $2^{S}$:

\begin{gather*}
    X \land^{F} Y = X \cap Y\\
    X \lor^{F} Y = X \cup Y\\
    \neg^{F} X = \{ s \mid s^{*} \not\in X \}\\
    X \circ^{F} Y = \{ u \mid \exists s,t (s \in X \And t \in Y \And Rstu) \}\\
    X \to^{F} Y = \{ s \mid \{ s \} \circ^{F} X \subseteq Y \}\\
    \Box^{F} X = \{ s \mid \forall t (Qst \To t \in X) \}\\
    \Box_L^{F} X = \{ s \mid \forall t (Q_Lst \To t \in X) \}
\end{gather*}
\end{definition}
These operations are related to the standard satisfaction clauses of relevant (modal) logic. The tonicity conditions incorporated into the definition of a frame ensure that $S(\uparrow)$ is closed under $c^{F}$ for $c \in \{ \land, \lor, \to, \neg, \Box, \Box_L \}$.

\begin{definition}\label{def:interpretation}
For each $M = (F, V)$, the \emph{$M$-interpretation} is a function $\llbracket \, \rrbracket_{M} : Fm_{\mathcal{L}} \to S(\uparrow)$ such that $\llbracket p\rrbracket_M = V(p)$ and
\begin{equation*}
\llbracket c(\f_1, \ldots, \f_n)\rrbracket_{M} = c^{F} \big ( \llbracket \f_1\rrbracket_{M}, \ldots, \llbracket \f_n\rrbracket_{M} \big)
\end{equation*}
for all $c \in \{ \land, \lor, \to, \neg, \Box, \Box_L \}$.
\end{definition}
 Since $S(\uparrow)$ is closed under each $c^{F}$ and $V(p) \in S(\uparrow)$ by definition, it follows that $\llbracket \f\rrbracket_M \in S(\uparrow)$ for all $\f \in Fm_{\mathcal{L}}$. We will often write $(M, s) \models \f$ instead of $s \in \llbracket\f\rrbracket_{M}$ and $(M, X) \models \Gamma$ instead of $(M, s) \models \f$ for all $\f \in \Gamma$ and $s \in X$ ($s \models \f$, $X \models \Gamma$ when $M$ is clear from context); and we will not distinguish between singletons and their elements when using this notation. 
 
 As usual in relevant logic, elements of $S$ are seen as bodies of information, or \emph{situations}, roughly in the sense of \cite{BarwisePerry1983}, partially ordered by the amount of information they support. Situations are not closed under the usual laws of classical logic, and the relations used to define the operations corresponding to $\neg$ and $\to$ are introduced to achieve this. First, situations may be incomplete or inconsistent, i.e. for some $s \in S$ we may have $s \in \llbracket \f\rrbracket_M \cap \llbracket \neg\f\rrbracket_M$ (meaning that $s \not\leq s^*$) or $s \not\in \llbracket \f \rrbracket_M \cup \llbracket \neg \f\rrbracket_M$ (meaning that $s^* \not\leq s$). Informally, $s^{*}$ is seen as the maximal situation that is \emph{compatible} with $s$. Second, note that $s \in X \to Y$ iff, for all $t,u$, if $Rstu$ and $t \in X$, then $u \in Y$. Interpretation of implication in terms of a ternary relation on situations enables the failure of the so-called ``paradoxes of strict implication''; note that it may be the case that $s \in \llbracket \f\rrbracket_M$ and $s \not\in \llbracket \f \to \f \rrbracket_M$ if there are $t,u$ such that $Rstu$ and $t \not\leq u$. Informally, the ternary relation $R$ is seen as representing \emph{combination} of information supported by situations; $Rstu$ can be seen as representing the fact that the body of information that results from combining the information supported by $s$ with the information supported by $t$ is contained in the information supported by $u$.\footnote{For instance, Dunn and Restall point out that ``perhaps the best reading [of $Rstu$] is to say that the combination of the pieces of information $s$ and $t$ (not necessarily the union) is a piece of information in $u$'' \cite[p.\ 67]{DunnRestall2002}.} Not much is assumed about $R$ in the general setting; for instance, we do not assume that information combination is commutative ($Rstu \To Rtsu$), associative ($Rstuv \To Rs(tu)v$; see the explanation of the notation before Figure \ref{fig:frame_conditions}) or reflexive ($Rsss$). However, these and similar properties of $R$ can be assumed when one considers stronger relevant logics; see Figure \ref{fig:frame_conditions}. The modal accessibility relation $Q$ represents information about the beliefs of our contextually fixed agent. In particular, $Qst$ expresses that all information that the agent believes according to $s$ is supported by $t$. 
 
 We now define an expansion of frames that is used in relational semantics for relevant logics. The key feature is a semantic deduction theorem according to which $\f \to \ff$ is valid iff $\ff$ follows from $\f$ in the frame.
 
 \begin{definition}
 An \emph{$L$-frame} is a structure $\rs{F} = (F, L)$, where $F$ is a frame and $L \in S(\uparrow)$ such that
 \begin{gather}
 \forall s \exists x (x \in L \And Rxss) \label{eq:L1}\\
 s \in L \And Rstu  \To t \leq u \label{eq:L2}
\end{gather}
A \emph{model} based on an $L$-frame $\rs{F}$ (an $L$-model) is a tuple $\rs{M} = (\rs{F}, V)$ where $V: Pr \to S(\uparrow)$. A formula $\f$ is \emph{valid} in an $L$-model iff $L \subseteq \llbracket\f\rrbracket_{\rs{M}}$; notation $M \models \f$. A formula $\f$ is valid in a class of $L$-frames iff it is valid in each $L$-model based on an $L$-frame in the class.
 \end{definition}
 
 \begin{lemma}\label{lem:ded_L-models}
 For all $L$-models $\rs{M}$, $\f \to \ff$ is valid in $\rs{M}$ iff $\llbracket \f\rrbracket_{\rs{M}} \subseteq \llbracket \ff\rrbracket_{\rs{M}}$.
 \end{lemma}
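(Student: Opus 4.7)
The plan is to unfold both sides of the biconditional according to the semantic definitions and then discharge each direction using one of the two frame conditions on $L$. Recall that by Definitions of the interpretation, $s \in \llbracket \f \to \ff \rrbracket_{\rs{M}}$ is equivalent to: for all $t, u$, if $Rstu$ and $t \in \llbracket\f\rrbracket_{\rs{M}}$, then $u \in \llbracket\ff\rrbracket_{\rs{M}}$. So $\f \to \ff$ is valid in $\rs{M}$ iff this condition holds for every $s \in L$.

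For the left-to-right direction, I would assume $L \subseteq \llbracket \f \to \ff \rrbracket_{\rs{M}}$ and take an arbitrary $t \in \llbracket\f\rrbracket_{\rs{M}}$. Applying condition \eqref{eq:L1}, I pick some $x \in L$ with $Rxtt$. Since $x$ belongs to $L$, it belongs to $\llbracket \f \to \ff \rrbracket_{\rs{M}}$, and the triple $Rxtt$ together with $t \in \llbracket\f\rrbracket_{\rs{M}}$ delivers $t \in \llbracket\ff\rrbracket_{\rs{M}}$, as required.

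For the right-to-left direction, I would assume $\llbracket\f\rrbracket_{\rs{M}} \subseteq \llbracket\ff\rrbracket_{\rs{M}}$ and take arbitrary $s \in L$, $t, u$ with $Rstu$ and $t \in \llbracket\f\rrbracket_{\rs{M}}$. Condition \eqref{eq:L2} gives $t \leq u$. Since $\llbracket\f\rrbracket_{\rs{M}} \in S(\uparrow)$ by the observation following Definition \ref{def:interpretation}, the formula's denotation is upward closed, so $u \in \llbracket\f\rrbracket_{\rs{M}}$ and hence $u \in \llbracket\ff\rrbracket_{\rs{M}}$ by hypothesis. This shows $s \in \llbracket\f \to \ff\rrbracket_{\rs{M}}$ and therefore $L \subseteq \llbracket\f \to \ff\rrbracket_{\rs{M}}$.

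There is no real obstacle: conditions \eqref{eq:L1} and \eqref{eq:L2} are precisely tailored so that elements of $L$ act as ``identities'' for $R$ in the two directions needed for a semantic deduction theorem, and the only auxiliary fact used is upward closedness of $\llbracket \cdot \rrbracket_{\rs{M}}$, which is automatic. The care required is just bookkeeping: making sure the quantifiers over $s \in L$ and over the witnesses $t, u$ in the ternary relation are handled in the correct order in each direction.
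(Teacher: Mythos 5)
Your proof is correct and is exactly the argument the paper compresses into its one-line proof: condition \eqref{eq:L1} for the left-to-right direction, condition \eqref{eq:L2} plus upward-closedness of $\llbracket\f\rrbracket_{\rs{M}}$ for the right-to-left direction. No differences to report.
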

 \begin{proof}
 Frame conditions \eqref{eq:L1}, \eqref{eq:L2} and the fact that $\llbracket \f\rrbracket_{\rs{M}} \in S(\uparrow)$ for all $\f$.
 \end{proof}
 
 The set of formulas valid in all $L$-models is denoted as $\logic{BM.C}$. We use the notation of \cite{Fuhrmann1988,Fuhrmann1990}, where $\logic{BM.C}$ denotes the smallest \textit{conjunctively regular} modal extension of $\logic{BM}$. A modal logic is said to be conjunctively regular if its modal operators distribute over conjunctions. The logic $\logic{BM}$ is one of the weakest propositional relevant logics; see \cite[I.3--I.4]{Fuhrmann1988}.
 
 \begin{definition}
The \emph{axiom system} $\ax{BM.C}$ consists of the following axioms
\begin{align*}
    &\text{(a1)}& &p \to p& &\text{(a7)}& & q \to (p \lor q) &\\
    &\text{(a2)}& & \neg (p \land q) \to (\neg p \lor \neg q) & &\text{(a8)}& & ((p \to q) \land (p \to r)) \to (p \to (q \land r)) & \\
    &\text{(a3)}& &(\neg p \land \neg q) \to \neg (p \lor q) & &\text{(a9)}& & ((p \to r) \land (q \to r)) \to ((p \lor q) \to r) &\\
    &\text{(a4)}& & (p \land q) \to p & &\text{(a10)}& & (p \land (q \lor r)) \to ((p \land q) \lor (p \land r)) &  \\
    &\text{(a5)}& & (p \land q) \to q &  &\text{(a11)}& &(\Box p \land \Box q) \to \Box (p \land q)&\\
    &\text{(a6)}& & p \to (p \lor q) & &\text{(a12)}& &(\Box_L p \land \Box_L q) \to \Box_L (p \land q) &
    \end{align*}
plus the rules of Uniform substitution (US) and Modus ponens (MP) and
\begin{center}
    $\text{\LeftLabel{(Adj)}\AxiomC{$ \vp$}\AxiomC{$ \psi$}\BinaryInfC{$\vp \land \psi$}\DisplayProof}$
    \quad
    $\text{\LeftLabel{(Aff)}\AxiomC{$ \vp' \to \vp$}\AxiomC{$ \psi \to \psi'$}\BinaryInfC{$(\vp \to \psi) \to (\vp' \to \psi')$}\DisplayProof}$
    \quad
    $\text{\LeftLabel{(Con)}\AxiomC{$ \vp \to \psi$}\UnaryInfC{$\neg \psi \to \neg\vp$}\DisplayProof}$\\[2mm]
    $\text{\LeftLabel{($\Box_L$-Mon)}\AxiomC{$ \vp \to \psi$}\UnaryInfC{$\Box_L \vp \to \Box_L \psi $}\DisplayProof}$
    \quad
    $\text{\LeftLabel{($\Box$-Mon)}\AxiomC{$ \vp \to \psi$}\UnaryInfC{$\Box \vp \to \Box \psi $}\DisplayProof}$
    \end{center}    

\end{definition}
The set of theorems of $\mathsf{BM.C}$, $Th(\mathsf{BM.C})$, is defined in the usual way.
The axiom system $\ax{BM.C}$ given here differs from Fuhrmann's axiom system for the conjunctively regular modal extension of $\logic{BM}$ as follows: instead of axiom schemata we use axioms and add (US); we add a conjunctive regularity axiom and a monotonicity rule for the second modal box operator $\Box_L$. Note that we do not assume any interplay between $\Box$ and $\Box_L$.

Some frame conditions assumed in the semantics for relevant modal logics stronger than $\logic{BM.C}$ are listed in Figure \ref{fig:frame_conditions}, where we define $Rstuv := \exists x (Rstx \And Rsuv)$, $Rs(tu)v := \exists x (Rsxv \And Rtux)$, $RQstu := \exists x (Rstx \And Qxu)$ and $QRstu := \exists x (Qsx \And Rxtu)$. 

\begin{figure}
\centering
\begingroup
\setlength{\tabcolsep}{2mm}
\begin{tabular}{l l l}
\multicolumn{2}{l}{\textbf{Frame condition}}\qquad &\textbf{Corresponding axiom/rule} \\[-2mm] \multicolumn{3}{c}{\hrulefill}\\ 
(DN) & $s^{**} = s$ & $p \tot \neg\neg p$\\
(Cp) & $Rstu \To Rsu^{*}t^{*}$ & $(p \to q) \to (\neg q \to \neg p)$\\
(WB) & $Rstu \To Rs(st)u$ & $((p \to q) \land (q \to r)) \to (p \to r)$\\
(X) & $s \in L \To s^{*} \leq s$ & $p \lor \neg p$\\
(Rd) & $Rss^{*}s$ & $(p \to \neg p) \to \neg p$\\
(B) & $Rstuv \To Rs(tu)v$ & $(p \to q)  \to ((r \to p) \to (r \to q))$\\
(CB) & $Rstuv \To Rt(su)v$ & $(p \to q) \to ((q \to r) \to (p \to r))$\\
(W) & $Rstu \To Rsttu$ & $(p \to (p \to q)) \to (p \to q)$\\
(C) & $Rstuv \To Rsutv$ & $(p \to (q \to r)) \to (q \to (p \to r))$\\
(M) & $Rstu \To (s \leq u \lor t \leq u)$ & $p \to (p \to p)$\\
(ER) & $\exists x (x \in L \And Rsxs)$ & $\dfrac{\f}{(\f \to \ff) \to \ff}$ \\
(Nec) & $(x \in L \And Q xs) \To s \in L$ & $\dfrac{\f}{\Box \f}$\\
($\Box$K) & $RQstu \To \exists x (Qtx \And QRsxu)$ & $\Box (p \to q) \to (\Box p \to \Box q)$\\
($\Box$T) & $Qss$ & $\Box p \to p$\\
($\Box$D) & $\exists x (Qs x^{*} \And Qs^{*}x)$ & $\Box \neg p \to \neg \Box p$\\
($\Box$4) & $(Qst \And Qtu) \To Qsu$ & $\Box p \to \Box \Box p$\\
($\Box$5) & $(Qs^{*}u \And Qst) \To Qt^{*}u$ & $\neg \Box p \to \Box \neg \Box p$ \\
\multicolumn{3}{c}{\hrulefill}
\end{tabular}\caption{Some prominent frame conditions with the corresponding axioms and rules.}\label{fig:frame_conditions}
\endgroup
\end{figure}

For example, the logic $\logic{DW.C}$ is defined in terms of frames satisfying (Cp), the logic $\logic{TW.C}$ adds (B) and (CB), $\logic{T.C}$ adds (WB), (X), (RD) and (W), $\logic{E.C}$ adds the rule (ER), $\logic{R.C}$ adds (C) and the logic $\logic{RM.C}$ adds (M). For a propositional relevant logic $\logic{PL}$, the logic $\logic{PL.R}$ adds \textit{implicational regularity} ($\Box$K) to $\logic{PL.C}$ and $\logic{PL.K}$ further adds (Nec). If $\logic{X} \in \{ \logic{C}, \logic{R}, \logic{K} \}$, then $\logic{PL.XT}$ adds ($\Box$T) to $\logic{PL.X}$, and similarly for $\logic{D}$, $\logic{4}$ and $\logic{5}$. For a more extensive list, including especially more variation on the propositional level, see \cite{Fuhrmann1988,Fuhrmann1990,RoutleyEtAl1982}.

If a relevant modal logic $\logic{L}$ is defined as a set of formulas valid in all $L$-frames satisfying a selection of frame conditions from Figure \ref{fig:frame_conditions}, then we will denote the set of $L$-frames satisfying those frame conditions as \emph{$L$-frames for $\logic{L}$}, or simply $\logic{L}$-frames, and the selection of the frame conditions as \emph{$\logic{L}$-conditions}. For any relevant modal logic $\logic{L}$, the axiom system $\ax{L}$ is obtained by adding to $\ax{BM.C}$ the axioms and rules corresponding to the $\logic{L}$-conditions according to Figure \ref{fig:frame_conditions}.

\begin{theorem}\label{thm:completeness-L-models}
For all $\logic{L}$, $\logic{L} = Th(\ax{L})$.
\end{theorem}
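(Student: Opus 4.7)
The plan is to establish the theorem in the standard two-directional fashion: soundness by induction, and completeness by a canonical-model construction parametrized by the choice of frame conditions defining $\logic{L}$.

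For soundness, I would show that each axiom of $\ax{BM.C}$ is valid on every $L$-frame and that each rule preserves validity, then verify, for every additional frame condition in Figure \ref{fig:frame_conditions}, that it forces validity of the corresponding axiom or preservation of the corresponding rule. Lemma \ref{lem:ded_L-models} reduces validity of an implication $\f \to \ff$ to inclusion $\LL\f\RR \subseteq \LL\ff\RR$, which makes most of the implicational axioms immediate from tonicity and the semantic clauses; the checks for axioms involving $*$, $R$, and the $Q$-relations are the usual unfoldings. This part is routine and modular: each line of the figure is handled in isolation.

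For completeness, I would carry out a Routley--Meyer / Fuhrmann style canonical model construction. Points of the canonical frame are the prime $\ax{L}$-theories (nonempty sets of formulas closed under $\ax{L}$-provable implication, conjunction-closed, and disjunction-prime). The order $\leq$ is set-inclusion; $R s t u$ holds iff $\{\psi \mid \exists \f \in t.\ (\f \to \psi) \in s\} \subseteq u$; $s^{\ast} = \{\f \mid \neg \f \notin s\}$; $Q s t$ iff $\{\f \mid \Box \f \in s\} \subseteq t$, and analogously for $Q_L$; the designated set $L$ is the collection of \emph{regular} prime theories (those containing $Th(\ax{L})$). The main workhorse is the Pair Extension Lemma: any pair $(\Gamma,\Delta)$ with $\Gamma$ closed under $\ax{L}$-conjunction-and-provable-implication and $\Delta$ closed under disjunction, such that no finite conjunction of $\Gamma$ proves a finite disjunction of $\Delta$, extends to a prime theory $\Gamma' \supseteq \Gamma$ disjoint from $\Delta$. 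From this one derives the standard ``squeeze lemmas'' needed to manufacture witnesses for $R$, $Q$, $Q_L$, and $*$; I would then prove the Truth Lemma $\f \in s \Tot (\rs{M}^{c}, s) \models \f$ by induction on $\f$, the connective clauses following the pattern already familiar from Routley--Meyer semantics.

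I would then verify that the canonical structure is an $L$-frame: condition \eqref{eq:L1} follows because for each prime theory $s$ there is a regular prime theory $x$ with $R x s s$ (take the closure of $Th(\ax{L})$ and extend using Pair Extension against a suitable rejection set), and \eqref{eq:L2} is immediate from the definition of $R$ and $L$. The modular part consists in showing that, for each frame condition assumed in $\logic{L}$, the canonical frame satisfies it whenever the corresponding axiom or rule is in $\ax{L}$; these are standard correspondence arguments, done uniformly by choosing the right rejection sets and applying Pair Extension.

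The main obstacle, as usual for this family of systems, is the canonical verification of the interaction conditions: the $*$-conditions (DN), (Cp), the ternary-relation conditions (WB), (B), (CB), (W), (C), (M), and above all the modal-frame conditions ($\Box$K), ($\Box$D), ($\Box$5), whose canonical correspondence requires manufacturing a fresh prime theory $x$ simultaneously satisfying two relational constraints (e.g.\ for ($\Box$K), given $R Q s t u$, producing $x$ with $Q t x$ and $Q R s x u$). Each such step reduces to a carefully arranged application of Pair Extension, and the non-trivial part is to check that the ``rejection set'' obtained by collecting all formulas that would refute the required membership is closed under disjunction and separated from the ``acceptance set'' by the relevant axiom; once all these lemmas are assembled, the theorem follows by contrapositive: any $\f \notin Th(\ax{L})$ is excluded from some regular prime theory, which by the Truth Lemma falsifies $\f$ at a point of $L$ in the canonical $L$-model, so $\f \notin \logic{L}$.
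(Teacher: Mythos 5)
Your proposal is correct and follows essentially the same route as the paper, which simply defers to Fuhrmann's soundness-plus-canonical-model argument: prime $\ax{L}$-theories as points, the regular ones as $L$, the Pair Extension Lemma as the workhorse, and modular canonical verification of each frame condition in Figure \ref{fig:frame_conditions}. Nothing in your sketch diverges from that standard argument in any substantive way.
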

\begin{proof}
Virtually the same argument as the one in \cite{Fuhrmann1988,Fuhrmann1990}. 
\end{proof}

\section{Classical epistemic logic based on relevant modal logic}\label{sec:motivation}

It is clear that relevant modal logics, as presented above, can be used to model reasoning about agents that are not logically omniscient with respect to classical logic, but whose beliefs are still logically regimented. For instance, $\Box (p \land \neg p) \to \Box q$ is invalid in all $\logic{L}$  mentioned above, but $\Box (p \land q) \to \Box p$ is valid. Interestingly, the degree to which agents are represented as being omniscient with respect to the relevant modal logic at hand is smaller than in the case of classical modal logic. As the Regularity axiom (a11) and the Monotonicity rule ($\Box$-Mon) entail, each $\mathsf{L}$ is closed under
\begin{equation}\label{eq:MR}
\dfrac{\f_1 \land \ldots \land \f_n \to \ff}{\Box\f_1 \land \ldots \land \Box \f_n \to \Box \ff} \tag{CR}
\end{equation}
for $n > 0$, but not each $\logic{L}$ is closed under the Necessitation rule (Nec). This means that even though the beliefs of agents are assumed to be closed under valid implications with non-empty antecedents, agents are not assumed to believe all valid formulas. 

Moreover, in general $\logic{L}$ is not closed under the implicational version of \eqref{eq:MR}
\begin{equation}\label{eq_MR-imp}
\dfrac{\f_1 \to (\ldots (\f_n \to \ff) \ldots)}{\Box \f_1 \to (\ldots (\Box \f_n \to \Box \ff) \ldots)} \, . \tag{IR}
\end{equation}
Relational semantics for relevant modal logics yield conjunctively regular but not necessarily implicatively regular modal logics. For instance, even in logics with the implicational version of contraposition (Cp) as an axiom, the formula $\Box (p \to q) \to (\Box \neg q \to \Box \neg p)$ is not necesarily valid. In this respect relevant modal logic is more fine-grained than classical modal logic, where the distinction between \eqref{eq:MR} and \eqref{eq_MR-imp} collapses due to the provability of $(\vp \to (\psi \to \chi)) \tot ((\vp \land \psi) \to \chi)$\footnote{This will be the case also for any $\mathsf{L}$ containing the K-axiom ($\Box$K), as well as for  classical logic, where $(\vp \to (\psi \to \chi)) \tot ((\vp \land \psi) \to \chi)$ is  provable.}.
The difference  between \eqref{eq:MR} and \eqref{eq_MR-imp} expresses the assumption that while beliefs of agents are represented as ``automatically'' closed under conjunction introduction, they are not seen as closed under implication elimination. As Sequoiah-Grayson \cite{SequoiahGrayson2021} points out, this can be understood as meaning that while agents are assumed to automatically \emph{aggregate} their beliefs, they are not assumed to automatically \emph{combine} them. While  this view constitutes a prima facie motivation for \eqref{eq:MR}, belief aggregation has been criticised in the epistemic logic literature, e.g. by \cite{FaginHalpern1987}, who argue that  beliefs tend to come in non-interacting clusters, or frames of mind. Accommodating this idea in our framework leads to a shift from relational to neighborhood semantics, and is left for further work.

The informational interpretations of relevant logics make them good candidates for logics that regiment epistemic attitudes of realistic, non-omniscient agents. One may wonder if such an employment of relevant logics is in conflict with using classical propositional logic. In fact, most frameworks dealing with logical omniscience build on classical propositional logic and add various extra requirements to normal modal logic on the epistemic level; see \cite{BertoHawke2018,FaginHalpern1987,Levesque1984} for instance. It can be argued that while classical consequence models truth preservation, relevant consequence models preservation of informational support in situations and their combinations. The latter is naturally associated with models of how agents reason.

In the rest of this paper we will develop a framework for epistemic logic based on these considerations.  In our framework, the agent is modelled as a \emph{relevant reasoner in a classical world:} the agent reasons in accordance with a relevant modal logic, but the propositional fragment is classical. Our framework extends the framework of \cite{Lakemeyer1987,Levesque1984} by including relevant implication and allowing for a greater variability on the propositional level. 

More specifically, for each relevant modal logic $\mathsf{L}$ we develop a ``classical'' modal logic $\mathsf{CL}$ with two main features. First, the propositional fragment of each $\mathsf{CL}$ is classical propositional logic. Second, the set of theorems of $\mathsf{CL}$ is not closed under (\ref{eq:MR}), but only under the so-called \emph{Relevant reasoning meta-rule}
\begin{equation}\label{eq:RR}
\dfrac{\vdash_{\mathsf{L}} \f_1 \land \ldots \land \f_n \to \ff}{\vdash_{\mathsf{CL}} \Box \f_1 \land \ldots \land \Box \f_n \to \Box \ff} \tag{RR}
\end{equation}
for $n > 0$. In order to ensure that the propositional fragment of $\mathsf{CL}$ is classical propositional logic, $\mathsf{CPC}$, we modify  the relational semantics for relevant modal logics presented above so that validity in a model is defined as satisfaction throughout a set of designated states that, as far as propositional connectives are concerned, behave like classical possible worlds.

The auxiliary modal operator $\Box_L$ is crucial in achieving the second feature of our framework, closure under (\ref{eq:RR}). In particular, we will show that 
\begin{equation*}
\vdash_{\mathsf{L}} \f \to \ff \:\implies\: \vdash_{\mathsf{CL}} \Box_L (\f \to \ff) \:\implies\: \vdash_{\mathsf{CL}} \f \to \ff
\end{equation*}
In fact, we will show that\: $\vdash_{\mathsf{L}} \f \iff \vdash_{\mathsf{CL}} \Box_L \f$. In this sense, $\Box_L$ captures validity in $\mathsf{L}$.

Closure under \eqref{eq:RR} is the key feature of our framework.
We stress that while it is satisfied, the standard logical omniscience problem is avoided in our framework since $\mathsf{CL}$ is generally \emph{not} closed under \eqref{eq:MR} nor under (Nec). This follows from the fact that while validity is defined as satisfaction in all standard states (in our case, possible worlds), the epistemic accessibility relation $Q$ may connect standard states with non-standard states.

\section{Possible worlds in relevant models}\label{sec:framework}

In this section we introduce the central semantic framework of the paper. Our semantics is a modification of the relational semantics for relevant modal logics introduced in Section \ref{sec:relevant_modal} where the set of logical states $L$ is replaced with a set of states $W$ that, as far as propositional connectives are concerned, behave like classical possible worlds. For technical reasons, our frames need to be \emph{bounded} in a specific sense. Bounded models for relevant modal logics were introduced by Seki \cite{Seki2003a,Seki2003}.

\begin{definition}
A \emph{bounded} frame is a frame $F$ where $(S, \leq)$ is a bounded poset, i.e.~there are elements $0, 1 \in S$ such that for all $s \in S$ $0 \leq s \leq 1$, and where, for all $s,t \in S$, the following are satisfied ($Q_{(L)} \in \{ Q, Q_L \}$):
\begin{gather}
1^{*}  = 0 \text{ and } 0^{*} = 1 \label{eq:1*}\\
Q_{(L)} 00 \label{eq:Q_L2}\\
Q_{(L)} 1s  \To s = 1 \label{eq:Q_L3}\\
R010 \label{eq:010}\\
R1st \To (s = 0 \text{ or } t = 1)\label{eq:R1st}
\end{gather}
A \emph{bounded model} is a model $M = (F, V)$ where $F$ is a bounded frame and, for all $p \in Pr$, $0 \not\in V(p)$ and $1 \in V(p)$.
\end{definition}

\begin{lemma}\label{lem:0empty1full}
If $M$ is a bounded model, then
\begin{enumerate}
\item $(M, 1) \models \f$ for all $\f$;
\item $(M, 0) \not\models \f$ for all $\f$.
\end{enumerate}
\end{lemma}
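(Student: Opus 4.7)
The plan is to prove both claims simultaneously by a single induction on the complexity of $\f$. The base case is immediate: when $\f = p \in Pr$, the definition of a bounded model stipulates $1 \in V(p)$ and $0 \notin V(p)$. The bounded frame conditions \eqref{eq:1*}--\eqref{eq:R1st} are tailored precisely so that $0$ and $1$ behave like an absurd and a trivial classical world respectively with respect to every connective, so the inductive step should go through clause by clause.

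For the Boolean connectives $\land$ and $\lor$, the statements follow directly from the induction hypothesis, since $\llbracket \f \land \ff \rrbracket_M = \llbracket \f \rrbracket_M \cap \llbracket \ff \rrbracket_M$ and similarly for $\lor$. For $\neg$, I use \eqref{eq:1*}: we have $1 \models \neg \f$ iff $1^{*} = 0 \notin \llbracket \f \rrbracket_M$ (true by IH), and $0 \models \neg \f$ iff $0^{*} = 1 \notin \llbracket \f \rrbracket_M$ (false by IH). For the modalities, $\Box$ and $\Box_L$ are treated identically: by \eqref{eq:Q_L3}, $Q_{(L)}1t$ forces $t = 1$, so the universal clause at $1$ reduces to $1 \models \f$, which holds by IH; and by \eqref{eq:Q_L2}, $Q_{(L)}00$ holds, so together with $0 \not\models \f$ (IH) we get $0 \not\models \Box_{(L)} \f$.

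The only clause requiring slightly more care is implication, but the bounded conditions have been set up exactly for this. For $1 \models \f \to \ff$, I unfold the semantic clause and use \eqref{eq:R1st}: any $t,u$ with $R1tu$ satisfies $t = 0$ or $u = 1$; in the first subcase the IH gives $t \notin \llbracket \f \rrbracket_M$, so the universally quantified implication is vacuously satisfied, and in the second the IH gives $u \in \llbracket \ff \rrbracket_M$. For $0 \not\models \f \to \ff$, I instantiate using \eqref{eq:010}, namely $R010$: by IH, $1 \in \llbracket \f \rrbracket_M$ but $0 \notin \llbracket \ff \rrbracket_M$, witnessing the failure.

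I do not expect any genuine obstacle; the whole argument is essentially a check that each bounded frame condition matches one clause of the semantics. The only place where a novice could slip is the $\to$ case at $0$, where one might forget that a counterexample is needed rather than a universal statement, but \eqref{eq:010} is precisely the axiom that supplies the required triple.
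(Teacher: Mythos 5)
Your proposal is correct and follows essentially the same route as the paper: a simultaneous induction on the complexity of $\f$, using \eqref{eq:1*} for negation, \eqref{eq:Q_L2} and \eqref{eq:Q_L3} for the modalities, and \eqref{eq:R1st} resp.\ \eqref{eq:010} for the two implication subcases. No gaps.
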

\begin{proof}
By simultaneous induction on the complexity of $\vp$. The base case holds by definition of $V$ in bounded models. The cases of $\vp:= \psi \land \chi$, $\vp := \psi \lor \chi$ are trivial. When $\vp := \neg \psi$, $1 \models \neg \psi$ iff $1^\ast \not\models \psi$ iff by \eqref{eq:1*} $0 \not\models \psi$, which holds by the induction hypothesis (IH); $0 \not\models  \neg\psi$ iff $0^\ast  \models \psi$ iff by \eqref{eq:1*} $1 \models \psi$, which holds by IH.
When $\vp := \psi \to \chi$, $1 \models \psi \to \chi$ iff $\forall s,t \in S (R1st, \ s\models \psi \Rightarrow t\models \chi)$. If $R1st$, then by \eqref{eq:R1st} we distinguish two cases: if $s=0$ then by IH $0 \not\models \psi$, and if $t=1$, then $t \models \chi$ by IH. In both cases $1 \models \f \to \ff$. $0 \not\models \psi \to \chi$ iff $\exists s,t \in S$ such that $R0st, \ s \models \psi$ and $t \not \models \chi$; by \eqref{eq:010} in particular $R010$, so  $1\models \psi$ and $0\not\models \chi$ by IH.
When $\vp:= \Box \psi$, $1 \models \Box \vp$ iff $\forall s \in S$, $Q1s \Rightarrow s \models \psi$; by \eqref{eq:Q_L3}, $Q1s$ entails $s=1$ and by IH $1 \models \psi$. $0 \not \models \Box \psi$ iff $\exists s \in S$ such that $Q0s$ and $s \not\models \psi$; by \eqref{eq:Q_L2} $Q00$ and $0 \not\models \psi$ by IH. The case $\vp:= \Box_L \psi$ is analogous. 
\end{proof}

\begin{definition}
Let $M$ be a bounded model. An element $w \in S$ is a \emph{possible world} iff, for all $s,t \in S$:
\begin{gather}
w^{*} = w\label{eq:world1}\\
Rwww\label{eq:world2}\\
Rwst \To (s = 0 \text{ or } w \leq t)\label{eq:world3}\\ 
Rwst \To (t = 1 \text{ or } s \leq w^{*}) \label{eq:world4}
\end{gather}
\end{definition}

\begin{definition}\label{def:W-frame}
A \emph{$W$-frame} is a structure $\rs{F} = (F, W)$ where $F$ is a bounded frame, $W \subseteq S$ is a set of possible worlds and the following conditions are satisfied:
\begin{gather}
(\forall w \in W)(\forall s,t,u)(Q_L wu \And Rust \To s \leq t) \label{eq:Q_L-s<t}\\
(\forall s)(\exists w \in W)(\exists u)(Q_L wu \And Russ) \label{eq:Q_L-ss}
\end{gather}
A \emph{$W$-model} based on $\rs{F}$ is $\rs{M} = (\rs{F}, V)$ where $V: Pr \to S(\uparrow)$ such that $1 \in V(p)$ for all $p$ and $0 \not\in V(p)$ for all $p \in Pr$.
\end{definition}

Conditions \eqref{eq:Q_L-s<t}-\eqref{eq:Q_L-ss} enable $W$-frames to simulate validity in $L$-models. In $W$-frames, the set of states $Q_L(W) = \{ u \mid \exists w (w \in W  \And Q_Lwu)\}$ ``plays the role'' of $L$ in $L$-models: comparing \eqref{eq:L1}-\eqref{eq:L2} with \eqref{eq:Q_L-s<t}-\eqref{eq:Q_L-ss}, we observe that the set of states $Q_L$-accessible from $W$ has the crucial properties of $L$ in $L$-models. In fact, if $(F, W)$ is a $W$-frame, then $(F, Q_L(W))$ is an $L$-frame. Moreover, we note that we have to explicitly mention the bounds $0,1$ in (\ref{eq:world3}) and (\ref{eq:world4}) since $Rwst \To w \leq t$ and $Rwst \To s \leq w^{*}$ do not hold in the canonical model (see  Section \ref{sec: axiomatisation} and Footnote \ref{ft:01canoncial}).

\begin{definition}
For each $W$-model $\rs{M}$, the \emph{$\rs{M}$- interpretation} $\llbracket\,\rrbracket_{\rs{M}}$ is defined as in Definition \ref{def:interpretation}. A formula $\f$ is \emph{valid} in a $W$-model $\rs{M}$ iff $W \subseteq \llbracket\f\rrbracket_{\rs{M}}$. A formula $\f$ is valid in a class of $W$-frames iff it is valid in each $W$-model based on a $W$-frame belonging to the class.
\end{definition}

\begin{lemma}\label{lem:ded_BoxL}
$\Box_L ( \f \to \ff)$ is valid in a $W$-model $\rs{M}$ iff $\llbracket \f\rrbracket \subseteq \llbracket \ff\rrbracket$.
\end{lemma}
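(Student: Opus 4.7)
The plan is to unwind the definitions on both sides and notice that the statement is the $W$-model analogue of Lemma \ref{lem:ded_L-models}, with conditions \eqref{eq:Q_L-s<t} and \eqref{eq:Q_L-ss} playing the role that \eqref{eq:L1} and \eqref{eq:L2} played for $L$-frames. Explicitly, $\Box_L(\f \to \ff)$ is valid in $\rs{M}$ iff for every $w \in W$, every $u$ with $Q_Lwu$, and every $s,t$ with $Rust$, if $s \in \LL\f\RR_{\rs{M}}$ then $t \in \LL\ff\RR_{\rs{M}}$. So the two conditions on $W$-frames are precisely what is needed to pivot between this condition and plain inclusion of $\LL\f\RR_{\rs{M}}$ in $\LL\ff\RR_{\rs{M}}$.

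For the direction from validity of $\Box_L(\f \to \ff)$ to the inclusion, I take any $s \in \LL\f\RR_{\rs{M}}$ and apply \eqref{eq:Q_L-ss} to obtain some $w \in W$ and $u$ with $Q_Lwu$ and $Russ$. Since $w \in W \subseteq \LL\Box_L(\f \to \ff)\RR_{\rs{M}}$ and $Q_Lwu$, we get $u \in \LL\f \to \ff\RR_{\rs{M}}$, whence $Russ$ together with $s \in \LL\f\RR_{\rs{M}}$ yields $s \in \LL\ff\RR_{\rs{M}}$.

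For the converse, assume $\LL\f\RR_{\rs{M}} \subseteq \LL\ff\RR_{\rs{M}}$ and fix $w \in W$, $u$ with $Q_Lwu$, and $s,t$ with $Rust$ and $s \in \LL\f\RR_{\rs{M}}$. Condition \eqref{eq:Q_L-s<t} gives $s \leq t$, and since $\LL\ff\RR_{\rs{M}} \in S(\uparrow)$, the assumption $s \in \LL\f\RR_{\rs{M}} \subseteq \LL\ff\RR_{\rs{M}}$ upgrades to $t \in \LL\ff\RR_{\rs{M}}$, as desired.

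There is no real obstacle here; the work was done when conditions \eqref{eq:Q_L-s<t}--\eqref{eq:Q_L-ss} were designed so that $Q_L(W)$ behaves exactly like the set $L$ in an $L$-frame (as remarked right after Definition \ref{def:W-frame}). The only small thing to keep track of is that upward closure of $\LL\ff\RR_{\rs{M}}$, used in the converse direction, is part of the standing guarantee that $\LL\cdot\RR_{\rs{M}}$ takes values in $S(\uparrow)$.
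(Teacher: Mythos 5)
Your proof is correct and follows exactly the route the paper intends: the paper's own proof is just the one-line citation of conditions \eqref{eq:Q_L-s<t} and \eqref{eq:Q_L-ss} together with upward closure of $\llbracket \ff \rrbracket_{\rs{M}}$, and your argument is the faithful unpacking of that, using \eqref{eq:Q_L-ss} for the validity-to-inclusion direction and \eqref{eq:Q_L-s<t} plus monotonicity for the converse. Nothing to add.
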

\begin{proof}
\eqref{eq:Q_L-s<t} and \eqref{eq:Q_L-ss}, together with the fact that $\llbracket \f\rrbracket_{\rs{M}} \in S(\uparrow)$, for all $\f$.
%
\end{proof}

\begin{proposition}\label{prop:classconn}
Let $\rs{M}$ be any W-model and $w$ a possible world. Then:
\begin{enumerate}
\item $(\rs{M}, w) \models \neg\f$ iff $(\relstruct{M}, w) \not\models \f$
\item $(\relstruct{M}, w) \models \f \to \ff$ iff $(\relstruct{M}, w) \not\models \f$ or $(\relstruct{M}, w) \models \ff$. 
\end{enumerate}
\end{proposition}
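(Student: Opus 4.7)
The plan is to handle the two clauses separately, using for each the specific properties of possible worlds listed in \eqref{eq:world1}--\eqref{eq:world4}, together with Lemma \ref{lem:0empty1full} to dispose of the bounds $0$ and $1$.

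Clause (1) is immediate. By the clause for negation in Definition \ref{def:interpretation}, $(\relstruct{M},w) \models \neg\f$ iff $w^{*} \not\models \f$; since $w$ is a possible world, \eqref{eq:world1} gives $w^{*} = w$, so this is equivalent to $w \not\models \f$.

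For clause (2), the left-to-right direction uses \eqref{eq:world2}: assuming $w \models \f \to \ff$ and $w \models \f$, from $Rwww$ the standard implication clause yields $w \models \ff$; contraposing, if $w \models \f$ then $w \models \ff$. For the right-to-left direction I distinguish the two disjuncts of the assumption. Suppose first that $w \models \ff$, and pick arbitrary $s,t$ with $Rwst$ and $s \models \f$. From $s \models \f$ and Lemma \ref{lem:0empty1full}(2) we have $s \neq 0$, so \eqref{eq:world3} forces $w \leq t$; upward closure of $\llbracket\ff\rrbracket_{\rs{M}}$ then gives $t \models \ff$. Suppose instead that $w \not\models \f$, and again pick $s,t$ with $Rwst$ and $s \models \f$. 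Now I apply \eqref{eq:world4}: either $t = 1$, in which case $t \models \ff$ by Lemma \ref{lem:0empty1full}(1); or $s \leq w^{*} = w$, in which case upward closure of $\llbracket\f\rrbracket_{\rs{M}}$ would give $w \models \f$, contradicting the assumption. Either way $t \models \ff$, so $w \models \f \to \ff$.

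The only delicate point is the right-to-left direction of (2), and the main obstacle is remembering to invoke \emph{both} \eqref{eq:world3} and \eqref{eq:world4}, in combination with Lemma \ref{lem:0empty1full}, to rule out the ``boundary'' cases $s = 0$ and $t = 1$ that the conditions explicitly permit. Once this book-keeping is done, the classical behaviour of $\neg$ and $\to$ at worlds drops out mechanically.
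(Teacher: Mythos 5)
Your proof is correct and follows essentially the same route as the paper: \eqref{eq:world1} for negation, $Rwww$ for the left-to-right direction of implication, and a case split on the two disjuncts using \eqref{eq:world3}, \eqref{eq:world4} and Lemma \ref{lem:0empty1full} for the converse. If anything, you are slightly more explicit than the paper in the $w \models \ff$ case, where you use Lemma \ref{lem:0empty1full}(2) to rule out $s = 0$ before invoking \eqref{eq:world3}; the paper elides this step.
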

\begin{proof}
The first claim follows easily from $w = w^{*}$. The left-to-right implication of the second claim follows from $Rwww$. The right-to-left implication is established as follows. If $w \models \ff$, then $w \models \f \to \ff$ since $Rwst$ implies $w \leq t$. If $w \not\models \f$ and $Rwst$ with $ s \models \f$, then we reason as follows. If $t = 1$, then $t \models \ff$ by Lemma \ref{lem:0empty1full}. If $t \neq 1$, then $s \leq w = w^{*}$ by \eqref{eq:world4} and so $w \models \f$, which is a contradiction.  
\end{proof}

Note that even though $\neg$ and $\to$ behave like Boolean negation and material implication, respectively, when evaluated in possible worlds, their semantic interpretation is uniform across the model, namely, it is given by the semantic operations $\neg^{\bm{F}}$ and $\to^{\bm{F}}$. The difference in their behaviour is given by the specific properties of the states of evaluation. Hence, we do not assume that the meaning of the symbols $\neg$ and $\to$ is context-dependent in our setting.\footnote{We thank Peter Verdée and Pierre Saint-Germier for pushing us on this point.} 

\begin{definition}
A \emph{C-variant} of an $L$-frame condition $\Phi$ from Figure \ref{fig:frame_conditions} is a first-order formula that results from $\Phi$ by replacing each occurrence of $s \in L$ by $\exists w (w \in W \And Q_L ws)$, where $w$ is a fresh variable.

\emph{$\logic{CL}$ frame conditions} are the C-variants of the $\logic{L}$ frame conditions, for all $\logic{L}$. A $\logic{CL}$-frame is any $W$-frame that satisfies the $\logic{CL}$ frame conditions. A $\logic{CL}$-model is a model based on a $\logic{CL}$-frame.
\end{definition}

It follows from the definition that if $L$ does not occur in frame condition $\Phi$, then $\Phi$ is identical to its C-variant.

We have already noted that each $W$-model can be seen as an $L$-model where $L = Q_L(W)$. Conversely, each $L$-model can be transformed into an ``equivalent'' $W$-model satisfying the ``right'' frame conditions. This fact will be used in the completeness proof in the next section.

\begin{proposition}\label{prop:+}
For each $L$-model $\rs{M}$ for $\logic{L}$ with a set of states $S$ there is a $W$-model $\rs{M'}$ for $\logic{CL}$ with the set of states $S' \supsetneq S$ such that, for all $\f \in Fm_{\mathcal{L}}$:
\begin{enumerate}
\item for all $s \in S$, $(\rs{M}, s) \models \f$ iff $(\rs{M'}, s) \models \f$;
\item if $\rs{M} \not\models \f$, then $\rs{M'} \not\models \Box_L \f$.
\end{enumerate}
\end{proposition}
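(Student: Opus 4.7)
The plan is to extend $\rs{M}$ minimally with fresh bounds $0, 1$ and a single new possible world $w$, chosen so that $Q'_L(w) \cap S = L$. Concretely, I would set $S' := S \cup \{0, 1, w\}$ with $0$ the minimum and $1$ the maximum of $\leq'$ and $w$ incomparable with every element of $S$. Extend $*$ by $0^{*'} = 1$, $1^{*'} = 0$, $w^{*'} = w$; set $V'(p) := V(p) \cup \{1\}$; let $R'$ consist of $R$ on $S^3$ together with all ``bound-forced'' triples ($R' s t u$ whenever $s = 0$, $t = 0$, or $u = 1$), plus the single triple $R' w w w$; extend $Q, Q_L$ by the bound-forced arrows and add $Q'_L w u$ for each $u \in L \cup \{w, 1\}$, with no other $Q$- or $Q_L$-arcs out of $w$. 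Finally set $W := \{w\}$.

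Then I would verify that $\rs{M}'$ is a $\logic{CL}$-model. Tonicity of the extended relations reduces to a handful of cases involving $\{0, 1, w\}$ and is immediate since $w$ is incomparable with $S$. The bounded-frame conditions \eqref{eq:1*}--\eqref{eq:R1st} hold by construction. The possible-world conditions \eqref{eq:world1}--\eqref{eq:world4} for $w$ follow because the only non-bound $R'$-triple starting at $w$ is $R' w w w$, so \eqref{eq:world3} and \eqref{eq:world4} collapse to their trivial disjuncts. The $W$-frame conditions \eqref{eq:Q_L-s<t} and \eqref{eq:Q_L-ss} follow from \eqref{eq:L2} and \eqref{eq:L1} in $\rs{M}$, since $Q'_L(w) \cap S = L$ and $R'$ agrees with $R$ on $S^3$. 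Every $\logic{CL}$-condition is the C-variant of an $\logic{L}$-condition, obtained by replacing ``$s \in L$'' with ``$\exists w'(w' \in W \land Q_L w' s)$''; under $Q'_L(W) \cap S = L$, the C-variant on $\rs{M}'$ reduces to the $\logic{L}$-condition on $\rs{M}$, with trivial additional checks at $0, 1, w$.

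Claim (1) then follows by induction on $\f$: the base and $\land, \lor, \neg$ cases are immediate from $V'(p) \cap S = V(p)$; for $\f \to \ff$ at $s \in S$, any new $R'$-triple $R' s t u$ has $t = 0$ or $u = 1$ (or involves only $w$), which is vacuous or trivial by Lemma \ref{lem:0empty1full}; for $\Box \f, \Box_L \f$ at $s \in S$, the modal successors of $s$ in $\rs{M}'$ coincide with those in $\rs{M}$ up to bound-forced successors of $1$, which add no new satisfaction constraints. For claim (2), if $\rs{M} \not\models \f$, pick $\ell \in L$ with $\ell \not\models \f$; by (1), $\ell \not\models' \f$; since $w \in W$ and $Q'_L w \ell$, we obtain $w \not\models' \Box_L \f$, hence $\rs{M}' \not\models \Box_L \f$. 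The main obstacle is not conceptual but combinatorial: one must simultaneously maintain tonicity of $R'$, the tight restrictions on $R'$-triples at $w$ imposed by \eqref{eq:world3}--\eqref{eq:world4}, and the $\logic{CL}$-conditions uniformly across all $\logic{L}$; making $w$ incomparable with $S$ localises all tonicity issues involving $w$ to the bounds $0, 1$, keeping this bookkeeping manageable.
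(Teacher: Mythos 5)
Your construction is essentially the one in the paper's appendix: the paper's $\rs{M^{+}}$ adds exactly the same fresh elements $0,1,w$ with $w$ incomparable to $S$, the same ``bound-forced'' triples and arcs, the same extension of $*$ and of $V$, and sets $Q_L^{+}(w) = L \cup \{w,1\}$. The overall strategy and the main verification steps (tonicity, boundedness, $w$ being a possible world, \eqref{eq:Q_L-s<t} from \eqref{eq:L2} and \eqref{eq:Q_L-ss} from \eqref{eq:L1}, the induction for Claim (1), and picking a falsifying $\ell \in L$ with $Q'_L w\ell$ for Claim (2)) all match the paper's.

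There is one concrete slip. You stipulate ``no other $Q$- or $Q_L$-arcs out of $w$'' beyond the bound-forced ones and the $Q'_L$-arcs into $L \cup \{w,1\}$, so in your model $Q'(w) = \{1\}$, whereas the paper also puts $(w,w)$ into $Q^{+}$. This matters for preservation of the $\logic{CL}$-conditions: if $\logic{L}$ has ($\Box$T), the condition $Qss$ must hold at $s = w$ and fails in your model; likewise ($\Box$D) demands some $x$ with $Q'wx^{*}$ and $Q'wx$, which is impossible when $Q'(w)=\{1\}$ because $1^{*'} = 0$. Adding $Q'ww$, as the paper does, repairs both and is harmless elsewhere. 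Beyond that, be aware that the ``trivial additional checks at $0,1,w$'' are where the paper spends most of its appendix: the conditions (B), (Cp), ($\Box$K), (C-Nec), etc.\ each require a genuine case split on whether the relevant states lie in $S$, meet $\{0,1\}$, or all equal $w$. Your observation that incomparability of $w$ with $S$ localises these cases is exactly the mechanism that makes them go through, but they do need to be carried out.
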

\begin{proof}
See the appendix.
\end{proof}

We note that, in comparison with \cite{Fuhrmann1988,Fuhrmann1990}, we do not consider logics arising by using two specific frame conditions, namely, the weakening frame condition (K) $Rstu \To s \leq u$ and (M3) $s \in L \And t \leq u \To t \leq s$. The reason for avoiding these is that (i) both are rather strong from the relevant logic perspective and (ii) including them would force us to use a substantially more complicated version of the construction used in the proof of Proposition \ref{prop:+}.

\section{Axiomatization}\label{sec: axiomatisation}

In this section we establish the main technical result of the paper, namely, a modular axiomatization result for logics of the form $\logic{CL}$.

\begin{definition}
Let $\mathsf{L}$ be the axiom system for one of the relevant modal logics discussed in Section \ref{sec:relevant_modal}. We define $\mathsf{CL}$ as the axiom system comprising
\begin{enumerate}
\item $\mathsf{CPC}$ with (MP) and (US) where substitutions are functions from $Pr$ to $Fm_{\mathcal{L}}$;
\item for all axioms $\f$ of $\mathsf{L}$, an axiom $\Box_L \f$, and for all inference rules $\dfrac{\f_1 \ldots \f_n}{\ff}$ of $\mathsf{L}$, the rule $\dfrac{\Box_L \f \ldots \Box_L \f_n}{\Box_L \ff}$;
\item The Bridge Rule (BR) $\dfrac{\Box_L (\f \to \ff)}{\f \to \ff}$\, .
\end{enumerate}

\end{definition}

\begin{lemma}[Soundness]\label{lem:CL_sound}
For all $\mathsf{L}$ and all $\f$, if $\f \in Th(\mathsf{CL})$, then $\f \in \logic{CL}$.
\end{lemma}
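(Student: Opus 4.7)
The plan is to argue by induction on the length of a $\mathsf{CL}$-derivation of $\f$, showing that each step yields a formula valid in every $\mathsf{CL}$-model. I fix a $\mathsf{CL}$-model $\rs{M}$ based on a $\mathsf{CL}$-frame $\rs{F}$ and show that every theorem is true throughout $W$.

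For part (1) of the axiom system, I would rely on Proposition \ref{prop:classconn}: at any $w \in W$, the clauses for $\neg$ and $\to$ agree with their Boolean counterparts, and $\land$, $\lor$ are set-theoretic intersection and union. So every $\mathsf{CPC}$ axiom is satisfied at each $w$. Modus ponens at $w$ is then immediate from the Boolean behaviour of $\to$ at worlds, and uniform substitution is handled as usual by noting that the class of $W$-models is closed under varying the valuation.

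For part (2) I would use the observation already made in the paper that if $(F, W)$ is a $W$-frame, then $(F, Q_L(W))$ is an $L$-frame; moreover, since $\rs{F}$ satisfies the $\mathsf{CL}$ frame conditions (the C-variants of the $\mathsf{L}$-conditions), unpacking the C-variant shows that $(F, Q_L(W))$ satisfies the corresponding $\mathsf{L}$-conditions, so it is in fact an $\mathsf{L}$-frame. Observe next that $\Box_L \chi$ is valid in $\rs{M}$ iff $\chi$ is valid in the associated $L$-model $\rs{M}^{L}$ with $L = Q_L(W)$: indeed, $W \subseteq \LL \Box_L \chi \RR$ iff for every $w \in W$ and every $t$ with $Q_L w t$ we have $t \in \LL\chi\RR$, which is exactly $Q_L(W) \subseteq \LL\chi\RR$. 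By soundness of $\mathsf{L}$ (the easy direction of Theorem \ref{thm:completeness-L-models}), every axiom $\f$ of $\mathsf{L}$ is valid in $\rs{M}^{L}$, so $\Box_L \f$ is valid in $\rs{M}$, and every rule of $\mathsf{L}$ preserves validity in $\rs{M}^{L}$, so its $\Box_L$-prefixed counterpart preserves validity in $\rs{M}$.

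For part (3), the bridge rule, suppose $\Box_L(\f \to \ff)$ is valid in $\rs{M}$. By Lemma \ref{lem:ded_BoxL} this gives $\LL\f\RR_{\rs{M}} \subseteq \LL\ff\RR_{\rs{M}}$. Take any $w \in W$. If $w \not\models \f$, then by Proposition \ref{prop:classconn}(2) $w \models \f \to \ff$; if $w \models \f$, then $w \models \ff$ by the inclusion, and again $w \models \f \to \ff$ by Proposition \ref{prop:classconn}(2). Hence $\f \to \ff$ is valid in $\rs{M}$.

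The only step requiring genuine care is part (2), specifically the verification that the C-variants really do reduce to the original $L$-conditions when restricted to $s \in Q_L(W)$. For a universal condition such as $s \in L \land Rstu \To t \leq u$, the C-variant $\exists w(w \in W \land Q_L ws) \land Rstu \To t \leq u$ holds for all $s,t,u$ iff, specialising to $s \in Q_L(W)$, the original holds; for an existential condition such as $\exists x(x \in L \land Rxss)$, its C-variant $\exists x\exists w (w \in W \land Q_L wx \land Rxss)$ is exactly $\exists x \in Q_L(W)\, Rxss$. So the reduction is straightforward in every case listed in Figure \ref{fig:frame_conditions}, and the induction goes through.
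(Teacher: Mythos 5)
Your proof is correct, and parts (1) and (3) coincide with the paper's own argument (Proposition \ref{prop:classconn} for the classical axioms and (MP); Lemma \ref{lem:ded_BoxL} plus Proposition \ref{prop:classconn} for the Bridge Rule). Where you genuinely diverge is part (2). The paper verifies the $\Box_L$-prefixed axioms and rules directly in $W$-models: Lemma \ref{lem:ded_BoxL} handles the implicational axioms and the rules (Aff), (Con), ($\Box$-Mon), ($\Box_L$-Mon), while the cases where the C-variant differs from the original condition --- (C-X), (C-ER), (C-Nec) --- each get an explicit semantic argument. You instead observe that for a $\logic{CL}$-frame $(F,W)$ the structure $(F, Q_L(W))$ is an $\logic{L}$-frame (the C-variant of a condition $\Phi$ is literally $\Phi$ with $s \in L$ read as $s \in Q_L(W)$, so the $\logic{L}$-conditions transfer), that the $W$-model and the induced $L$-model share the same interpretation function, and that $\Box_L\fff$ is valid in the former iff $\fff$ is valid in the latter; part (2) then reduces wholesale to soundness of $\ax{L}$ over $\logic{L}$-models. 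This is more modular, avoids repeating the case analysis for (C-X), (C-ER) and (C-Nec), and is in effect the semantic counterpart of Lemma \ref{lem:L-CL}. Two small points of care: for the rules you need not merely the statement of Theorem \ref{thm:completeness-L-models} but the model-by-model preservation of validity that its proof establishes, and for the $\Box_L$-version of (US) preservation holds only frame-by-frame rather than model-by-model --- harmless, since $\logic{CL}$ is defined by validity over a class of frames, but worth a sentence (the paper is equally silent on this); and your reduction inherits whatever care the soundness half of Theorem \ref{thm:completeness-L-models} itself requires (e.g.\ the paper's footnote concerning ($\Box$D)), though that burden is shared with the paper's own proof.
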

\begin{proof}
Induction on the length of proofs. The base case is established by showing that all axioms of $\mathsf{CL}$ are valid in all $\logic{CL}$-frames. (i) All axioms of $\mathsf{CPC}$ are valid in all $W$-frames thanks to Prop.~\ref{prop:classconn}. (ii) The fact that $\Box_L \f$ is valid in all $\logic{CL}$-frames for each axiom $\f$ of $\mathsf{L}$ can be shown using Lemma \ref{lem:ded_BoxL} and, where applicable, using the C-variants of the $\logic{L}$ frame conditions. In most cases, this boils down to standard arguments \cite{Fuhrmann1988,Fuhrmann1990,RoutleyEtAl1982}.\footnote{We note in relation to axiom ($\Box$D) that Fuhrmann's frame condition $\forall s \exists x (Q s^{*}x \And Qs x^{*})$ corresponds to $\Box p \to \neg \Box \neg p$, not to $\Box \neg p \to \neg \Box p$, as stated in \cite{Fuhrmann1988,Fuhrmann1990}. The frame condition suffices to show that $\Box \neg p \to \neg \Box p$ is valid only if it is assumed that $x \leq x^{**}$.} The one case where the C-variant differs from the original frame condition is established as follows. Assume that the C-variant of the frame condition (X), namely (C-X) $\exists w (w \in W \And Q_L ws) \To s^{*} \leq s$, holds in a $W$-frame $\rs{F}$. To show that $\Box_L (p \lor \neg p)$ is valid in any $\rs{M}$ based on $\rs{F}$, take $w \in W$ and assume that $Q_L ws$. By (C-X), $s^{*} \leq s$. Thus, if $s \not\models p$, then $s^{*} \not\models p$ and so $s \models \neg p$.

To establish the induction step, we have to show that each rule of inference of $\mathsf{CL}$ preserves validity in $\logic{CL}$-frames. (i) (MP) preserves validity in $W$-frames thanks to Proposition \ref{prop:classconn}; and (US) clearly preserves validity in all frames.

(ii) The fact that the $\Box_L$-version of (Adj) preserves validity in $W$-frames is established using the satisfaction clause for $\Box_L$ and $\land$. The fact that $\Box_L$-versions of (Aff), (Con), ($\Box$-Mon) and ($\Box_L$-Mon) preserve validity in $W$-frames is established easily using Lemma \ref{lem:ded_BoxL} and satisfaction clauses for the operators involved. The fact that the $\Box_L$-version of (ER) preserves validity in $W$-frames satisfying (C-ER) $\forall s \exists wt (w \in W \And Q_L wt \And R s t s)$ is established as follows. Assume that $\Box_L \f$ is valid in some $\rs{M}$ based on $\rs{F}$ satisfying (C-ER). Take some $s$ and assume $s \models \f \to \ff$; we have to show that $s \models \ff$. Using (C-ER), there are $w \in W$ and $t \in S$ such that $Q_L w t$ and $Rsts$. Hence, $t \models \f$ and so $s \models \ff$. The fact that the $\Box_L$-version of (Nec) preserves validity in $W$-frames satisfying (C-Nec) $s \in Q_L(W) \And Q st \To t \in Q_L(W)$ is established as follows. Assume that $\Box_L \f$ is valid in some $\rs{M}$ based on $\rs{F}$ satisfying (C-Nec). Take some $w \in W$ and $s,t$ such that $Q_L ws$ and $Q st$. By (C-Nec), there is $u \in W$ such that $Q_L u t$. It follows from this that $t \models \f$. Hence, $w \models \Box_L \Box \f$.

(iii) The Bridge Rule preserves validity in $W$-frames by Lemma \ref{lem:ded_BoxL} and Proposition \ref{prop:classconn}.
\end{proof}

\begin{lemma}\label{lem:L-CL}
For all $\mathsf{L}$, $\vdash_{\mathsf{L}} \f$ iff $\vdash_{\mathsf{CL}} \Box_L \f$.
\end{lemma}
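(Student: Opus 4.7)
The plan is to prove the two directions separately, as they call for essentially different methods: the $(\Rightarrow)$ direction is purely syntactic and yields to a straightforward induction on $\mathsf{L}$-derivations using the very construction of $\mathsf{CL}$, while the $(\Leftarrow)$ direction is best handled semantically, by combining soundness of $\mathsf{CL}$ (Lemma \ref{lem:CL_sound}), completeness of $\mathsf{L}$ (Theorem \ref{thm:completeness-L-models}), and the $L$-model to $W$-model construction of Proposition \ref{prop:+}.

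For $(\Rightarrow)$, I would induct on the length of an $\mathsf{L}$-derivation of $\f$. The base case is immediate from clause (2) of the definition of $\mathsf{CL}$: every $\mathsf{L}$-axiom $\f$ has $\Box_L\f$ as a $\mathsf{CL}$-axiom. For the induction step, if $\f$ is inferred from $\f_1,\ldots,\f_n$ by some inference rule of $\mathsf{L}$, the induction hypothesis yields $\vdash_{\mathsf{CL}} \Box_L \f_i$ for each $i\leq n$, and clause (2) supplies the matching $\Box_L$-decorated rule in $\mathsf{CL}$, which concludes $\vdash_{\mathsf{CL}} \Box_L \f$. This covers the rules (US), (MP), (Adj), (Aff), (Con), the monotonicity rules, and any extra rules such as (ER) or (Nec) coming from particular $\logic{L}$-conditions, uniformly.

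For $(\Leftarrow)$, I would argue by contraposition. Supposing $\not\vdash_{\mathsf{L}} \f$, Theorem \ref{thm:completeness-L-models} produces an $L$-model $\rs{M}$ for $\logic{L}$ in which $\f$ is refuted; Proposition \ref{prop:+}(2) then furnishes a $W$-model $\rs{M}'$ for $\logic{CL}$ in which $\Box_L \f$ is refuted; finally, Lemma \ref{lem:CL_sound} gives $\not\vdash_{\mathsf{CL}} \Box_L \f$. The main obstacle is not really local to the present lemma: the bulk of the work has been done already in Proposition \ref{prop:+}, whose construction is what bridges the two semantic setups and ensures that non-validity of $\f$ in $\logic{L}$ is witnessed as non-validity of $\Box_L\f$ in $\logic{CL}$. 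Any attempt at a purely syntactic proof of $(\Leftarrow)$ would have to contend with the presence of $\mathsf{CPC}$ and the Bridge Rule inside $\mathsf{CL}$, which open up ``new'' routes to $\Box_L$-formulas, and ruling these out without semantic tools appears to be considerably harder than the three-line assembly sketched above.
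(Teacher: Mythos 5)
Your proposal is correct and follows essentially the same route as the paper: the forward direction by induction on $\mathsf{L}$-proofs using the $\Box_L$-decorated axioms and rules of $\mathsf{CL}$, and the converse by contraposition via completeness of $\mathsf{L}$, Proposition \ref{prop:+}, and soundness of $\mathsf{CL}$. The only difference is that you make the appeal to Theorem \ref{thm:completeness-L-models} explicit where the paper leaves it implicit.
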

\begin{proof}
The fact that $\vdash_{\mathsf{L}} \f$ implies $\vdash_{\mathsf{CL}} \Box_L \f$ can be established by induction on the length of $\mathsf{L}$-proofs. The base case holds since $\Box_L \f$ is an axiom of $\mathsf{CL}$ for all axioms $\f$ of $\mathsf{L}$. The induction step is established using the fact that, by definition of $\mathsf{CL}$, if $\dfrac{\f_1 \ldots \f_n}{\ff}$ is an instance of an inference rule of $\mathsf{L}$, then $\dfrac{\Box_L \f \ldots \Box_L \f_n}{\Box_L \ff}$ is an instance of an inference rule of $\mathsf{CL}$.

Conversely, if $\not\vdash_{\mathsf{L}} \f$, then $\rs{M} \not\models \f$ for some $\logic{L}$-model $\rs{M}$. By Prop.~\ref{prop:+} there is a $\logic{CL}$-model $\rs{M'}$ such that $\rs{M'} \not\models \Box_L \f$ and so, by Lemma \ref{lem:CL_sound}, $\not\vdash_{\mathsf{CL}} \Box_L \f$.
\end{proof}

Lemma \ref{lem:L-CL} clarifies the role of the operator $\Box_L$ in our framework. We stress that $\Box_L$ is not an epistemic operator expressing attitudes of agents.\footnote{Nevertheless, formulas of the form $\Box_L (\f \to \ff)$ can be seen as expressing a form of information constraint; see Lemma \ref{lem:ded_BoxL}.} It follows from Lemma \ref{lem:L-CL} and the presence of the Bridge Rule that $ \vdash_{\mathsf{L}} \f \to \ff$ entails $ \vdash_{\mathsf{CL}} \f \to \ff$. In fact, we can prove a stronger claim.

\begin{proposition}
The following hold for all $\mathsf{L}$ such that the rule (Nec) is not an inference rule of $\mathsf{L}$:
\begin{enumerate}
    \item $ \vdash_{\mathsf{L}} \f$ entails $ \vdash_{\mathsf{CL}} \f$;
    \item $ \vdash_{\mathsf{CL}} \Box_L \f$ entails $ \vdash_{\mathsf{CL}} \f$.
\end{enumerate}
\end{proposition}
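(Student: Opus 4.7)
The plan is to prove item 1 by induction on the length of an $\mathsf{L}$-proof of $\f$; item 2 then follows automatically because Lemma \ref{lem:L-CL} makes the antecedents $\vdash_{\mathsf{L}} \f$ and $\vdash_{\mathsf{CL}} \Box_L \f$ interchangeable.

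For the base case, scanning $\ax{BM.C}$ and Figure \ref{fig:frame_conditions} reveals that every axiom of $\mathsf{L}$ except (X) and (DN) has the shape $\ff \to \fff$. For any such implication $\f$, the formula $\Box_L \f$ is an axiom of $\mathsf{CL}$ by construction, so one application of (BR) yields $\f$. The axiom (X), $p \lor \neg p$, is a classical tautology and is provable from the $\mathsf{CPC}$ component of $\mathsf{CL}$. The biconditional (DN), $p \tot \neg\neg p$, unfolds to $(p \to \neg\neg p) \land (\neg\neg p \to p)$; each conjunct is $\mathsf{L}$-derivable by projection from the (DN) axiom using (a4) or (a5) with (MP), so the preceding recipe handles each conjunct separately and classical adjunction recombines them.

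For the induction step I split on the rule used at the last proof step. The rules (MP), (US), and (Adj) are discharged by their classical counterparts available in $\mathsf{CL}$ acting on the induction hypothesis on the premises. For each of the other non-(Nec) rules --- namely (Aff), (Con), ($\Box$-Mon), ($\Box_L$-Mon), and (ER) --- the conclusion is always an implication, and I will use a uniform \textit{lift-and-drop} strategy: push each $\mathsf{L}$-derivable premise up to its $\Box_L$-version via Lemma \ref{lem:L-CL}, invoke the $\Box_L$-lifted rule that $\mathsf{CL}$ provides by construction, and finish with (BR). For example, an (Aff) step from $\vdash_{\mathsf{L}} \f' \to \f$ and $\vdash_{\mathsf{L}} \ff \to \ff'$ produces $\vdash_{\mathsf{CL}} \Box_L (\f' \to \f)$ and $\vdash_{\mathsf{CL}} \Box_L (\ff \to \ff')$ by Lemma \ref{lem:L-CL}, then $\vdash_{\mathsf{CL}} \Box_L ((\f \to \ff) \to (\f' \to \ff'))$ by the lifted rule, and finally $\vdash_{\mathsf{CL}} (\f \to \ff) \to (\f' \to \ff')$ by (BR).

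The unique rule for which this plan breaks --- and hence the reason the hypothesis excludes it --- is (Nec): its conclusion $\Box \f$ is not an implication, so even after the lifted rule produces $\vdash_{\mathsf{CL}} \Box_L \Box \f$, (BR) has no way to strip the outer $\Box_L$. Pinpointing this asymmetry between (Nec) and every other inference rule of $\mathsf{L}$ is the conceptual crux of the argument; once it is isolated, the remaining case analysis is routine.
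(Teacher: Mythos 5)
Your proposal is correct and follows essentially the same route as the paper: induction on the length of $\mathsf{L}$-proofs, with implicational axioms and implication-concluding rules handled via Lemma \ref{lem:L-CL} plus (BR), the propositional axiom (X) and the rule (Adj) handled through the $\mathsf{CPC}$ component, and (Nec) isolated as the one rule whose non-implicational conclusion blocks the lift-and-drop strategy; item 2 is then obtained from item 1 and Lemma \ref{lem:L-CL} exactly as in the paper. Your explicit decomposition of (DN) into its two implicational conjuncts is a slightly more careful treatment of a case the paper folds into its general remark about purely propositional axioms.
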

\begin{proof}
(i) The claim is established by induction on the length of $\mathsf{L}$-proofs. All implicational axioms of $\mathsf{L}$ are provable in $\mathsf{CL}$ by Lemma \ref{lem:L-CL} and the Bridge Rule; the axiom (X) $p \lor \neg p$ is provable in any $\mathsf{CL}$ since the propositional fragment of each $\mathsf{L}$ is included in $\mathsf{CPC}$ (this argument can be used also to show that the ``purely propositional'' axioms of $\mathsf{L}$ are provable in $\mathsf{CL}$). The cases of the induction step corresponding to rules with implicational conclusions are established using Lemma \ref{lem:L-CL} and (BR) as before (the arguments do not need to use the induction hypothesis). The case corresponding to (Adj) is established using the fact that $\vdash_{\mathsf{CPC}} \f \to (\ff \to (\f \land \ff))$ for all $\f, \ff$. We note that (Nec) is problematic: $\f$ is not necessarily an implication, so we can not use Lemma \ref{lem:L-CL} and (BR); using the induction hypothesis gives us only $\vdash_{\mathsf{CL}} \f$ and using only Lemma \ref{lem:L-CL} gives us $\vdash_{\mathsf{CL}} \Box_L \f$, from which we can infer only that $\vdash_{\mathsf{CL}} \Box_L \Box \f$ using the $\Box_L$-version of (Nec). (ii) follows from (i) and Lemma \ref{lem:L-CL}.
\end{proof}

Note that the converses of (i) and (ii) from the previous proposition do not hold.

\begin{proposition}[Relevant Reasoning]
Each $\mathsf{CL}$ is closed under the Relevant reasoning meta-rule  \eqref{eq:RR}
\begin{equation*}
\dfrac{\vdash_{\mathsf{L}} \f_1 \land \ldots \land \f_n \to \ff}{\vdash_{\mathsf{CL}} \Box \f_1 \land \ldots \land \Box \f_n \to \Box \ff}
\end{equation*}
for all $n > 0$.
\end{proposition}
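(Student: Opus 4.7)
The plan is to perform the bulk of the derivation inside $\mathsf{L}$ and then transfer the conclusion to $\mathsf{CL}$ via Lemma~\ref{lem:L-CL} together with the Bridge Rule. The target intermediate fact is
\begin{equation*}
\vdash_{\mathsf{L}} \Box \f_1 \land \ldots \land \Box \f_n \to \Box \ff,
\end{equation*}
from which Lemma~\ref{lem:L-CL} yields $\vdash_{\mathsf{CL}} \Box_L(\Box \f_1 \land \ldots \land \Box \f_n \to \Box \ff)$, and then (BR), applicable since the body of $\Box_L$ is an implication, delivers the $\mathsf{CL}$-theorem we want.

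To establish the intermediate fact inside $\mathsf{L}$ I would proceed in two parallel strands. Applying $(\Box\text{-Mon})$ to the hypothesis $\vdash_{\mathsf{L}} \f_1 \land \ldots \land \f_n \to \ff$ immediately gives $\vdash_{\mathsf{L}} \Box(\f_1 \land \ldots \land \f_n) \to \Box \ff$. Independently, I would show by induction on $n$ that
\begin{equation*}
\vdash_{\mathsf{L}} \Box \f_1 \land \ldots \land \Box \f_n \to \Box(\f_1 \land \ldots \land \f_n),
\end{equation*}
the base case $n=2$ being exactly axiom (a11). The two implications are then composed by applying the rule (Aff) with the (a1)-instance $\Box \ff \to \Box \ff$ on the right, followed by (MP); this yields the intermediate fact. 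The degenerate case $n=1$ collapses to a single use of $(\Box\text{-Mon})$.

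The main obstacle is the induction step for $n \geq 3$: naively one wants to chain a fresh instance of (a11) with the inductive hypothesis via transitivity of $\to$, but transitivity is \emph{not} among the axioms of the base logic $\mathsf{BM.C}$. The workaround has two moves. First, lift the inductive hypothesis $A \to B$, with $A = \Box \f_1 \land \ldots \land \Box \f_{n-1}$ and $B = \Box(\f_1 \land \ldots \land \f_{n-1})$, to $A \land \Box \f_n \to B \land \Box \f_n$ using monotonicity of $\land$ in its first argument, which is derivable in $\mathsf{BM.C}$ from (a4), (a5), (a8) and (Adj). Second, invoke the (a11)-instance $B \land \Box \f_n \to \Box(\f_1 \land \ldots \land \f_n)$ and glue it onto the previous step by the same (Aff)$+$(a1)$+$(MP) composition technique as in the final step. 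Since no axioms or rules specific to any individual $\mathsf{L}$ beyond $\mathsf{BM.C}$ enter the argument, the meta-rule holds uniformly for every $\mathsf{CL}$.
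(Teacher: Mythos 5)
Your proof is correct and follows essentially the same route as the paper's: derive $\Box\f_1\land\ldots\land\Box\f_n\to\Box\ff$ inside $\mathsf{L}$ via $(\Box\text{-Mon})$ and iterated (a11), then transfer to $\mathsf{CL}$ by Lemma~\ref{lem:L-CL} and the Bridge Rule. The paper compresses the $\mathsf{L}$-internal derivation into the phrase ``monotonicity and regularity of $\Box$'', whereas you correctly spell out the needed derived rules (transitivity via (Aff)$+$(a1)$+$(MP), and $\land$-monotonicity from (a4), (a5), (a8), (Adj)) in the weak base $\mathsf{BM.C}$.
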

\begin{proof}
 If $\vdash_{\mathsf{L}} \bigwedge_{i \leq n} \f_i  \to  \ff $, then $\vdash_{\mathsf{L}} \bigwedge_{i \leq n} \Box \f_i  \to  \Box \ff $ using monotonicity and regularity of $\Box$ in $\mathsf{L}$, and so $\vdash_{\mathsf{CL}} \Box_L \big (\bigwedge_{i \leq n} \Box \f_i  \to  \Box \ff \big )$ by Lemma \ref{lem:L-CL}. But then $\vdash_{\mathsf{CL}} \bigwedge_{i \leq n} \Box \f_i  \to  \Box \ff$ follows using the Bridge Rule.
\end{proof}

Let $\mathsf{(C)L} \in \{ \mathsf{L}, \mathsf{CL} \}$. A \emph{$\mathsf{(C)L}$-theory} is any set $\Gamma$ of formulas such that (i) $\f \in \Gamma$ and $\ff \in \Gamma$ only if $\f \land \ff \in \Gamma$; and (ii) if $\f \in \Gamma$ and $\vdash_{\mathsf{(C)L}} \f \to \ff$, then $\ff \in \Gamma$. A $\mathsf{(C)L}$-theory $\Gamma$ is \emph{prime} iff (iii) $\f \lor \ff \in \Gamma$ only if $\f \in \Gamma$ or $\ff \in \Gamma$; it is \emph{proper} iff $\Gamma \neq Fm$. 

A pair of sets of formulas $(\Gamma, \Delta)$ is \emph{$\mathsf{(C)L}$-independent} iff there are no finite non-empty sets $\Gamma' \subseteq \Gamma$ and $\Delta' \subseteq \Delta$ such that $$\vdash_{\mathsf{(C)L}} \bigwedge \Gamma' \to \bigvee \Delta' \, .$$ 

\begin{lemma}[Extension Lemma]\label{lem:extension}
For all $\mathsf{L}$:
\begin{enumerate}
    \item If $(\Gamma, \Delta)$ is $\mathsf{L}$-independent, then there is a prime $\mathsf{L}$-theory $\Sigma$ such that $\Gamma \subseteq \Sigma$ and $\Delta \cap \Sigma = \emptyset$.
    \item If $(\Gamma, \Delta)$ is $\mathsf{CL}$-independent and both $\Gamma$ and $\Delta$ are non-empty, then there is a non-empty proper prime $\mathsf{CL}$-theory $\Sigma$ such that $\Gamma \subseteq \Sigma$ and $\Delta \cap \Sigma = \emptyset$.
\end{enumerate}
\end{lemma}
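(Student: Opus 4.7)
The plan is to handle both parts by a single Lindenbaum-style saturation, writing $\vdash$ for the appropriate notion of provability ($\vdash_{\mathsf{L}}$ or $\vdash_{\mathsf{CL}}$). Enumerate $Fm_{\mathcal{L}} = \{\f_0, \f_1, \ldots\}$ and set $\Sigma_0 = \Gamma$, $\Delta_0 = \Delta$. At stage $n+1$ put $(\Sigma_{n+1}, \Delta_{n+1}) = (\Sigma_n \cup \{\f_n\}, \Delta_n)$ if this pair is still independent, and $(\Sigma_{n+1}, \Delta_{n+1}) = (\Sigma_n, \Delta_n \cup \{\f_n\})$ otherwise. Let $\Sigma = \bigcup_n \Sigma_n$ and $\Delta^* = \bigcup_n \Delta_n$; by construction these sets partition $Fm_{\mathcal{L}}$.

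The key invariant is that every $(\Sigma_n, \Delta_n)$ is independent, so that $(\Sigma, \Delta^*)$ is independent as well (any witness to non-independence of the union already witnesses non-independence at some finite stage). The only nontrivial step is the rejection branch: if $(\Sigma_n \cup \{\f_n\}, \Delta_n)$ fails independence, I must show that $(\Sigma_n, \Delta_n \cup \{\f_n\})$ succeeds. Suppose, for contradiction, both fail. The first failure yields $\vdash A \land \f_n \to B$ and the second yields $\vdash C \to D \lor \f_n$, where $A, C$ are conjunctions of finitely many elements of $\Sigma_n$ and $B, D$ are disjunctions of finitely many elements of $\Delta_n$ (and $\f_n$ must actually occur in these derivations, since $(\Sigma_n, \Delta_n)$ is already independent). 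Combining them using distributivity (a10), (a7)--(a9), and monotonicity gives $\vdash A \land C \to B \lor D$, contradicting independence of $(\Sigma_n, \Delta_n)$; degenerate cases where $A$ or $D$ is empty are handled by minor simplifications of the same argument. This gluing step is the main obstacle of the proof and the only place where the distributive lattice structure of $\mathsf{BM.C}$ is essentially used.

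Granted independence of $(\Sigma, \Delta^*)$, every required property of $\Sigma$ follows cheaply from the fact that $\Sigma$ and $\Delta^*$ partition $Fm_{\mathcal{L}}$: disjointness $\Sigma \cap \Delta = \emptyset$ holds since $\Delta \subseteq \Delta^*$ and any common element $\psi$ would yield the witness $\vdash \psi \to \psi$; closure under provable implication follows because $\f \in \Sigma$ and $\vdash \f \to \ff$ together with $\ff \in \Delta^*$ would directly witness non-independence; closure under conjunction follows because $\f, \ff \in \Sigma$ together with $\f \land \ff \in \Delta^*$ would be witnessed by $\vdash (\f \land \ff) \to (\f \land \ff)$ with $\Gamma' = \{\f, \ff\}$ and $\Delta' = \{\f \land \ff\}$; and primeness follows because $\f \lor \ff \in \Sigma$ with $\f, \ff \in \Delta^*$ would be witnessed by $\vdash (\f \lor \ff) \to (\f \lor \ff)$ with $\Gamma' = \{\f \lor \ff\}$ and $\Delta' = \{\f, \ff\}$.

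Part 1 is thereby complete. For part 2 the only addition is to observe that $\Sigma$ is non-empty because $\Gamma \subseteq \Sigma$ and $\Gamma \neq \emptyset$, and that $\Sigma$ is proper because any element of $\Delta \subseteq \Delta^*$ witnesses $\Sigma \neq Fm_{\mathcal{L}}$, using $\Delta \neq \emptyset$. Since the axioms (a7)--(a10) and the relevant rules are available in both $\mathsf{L}$ and $\mathsf{CL}$, the whole argument goes through uniformly in the two settings.
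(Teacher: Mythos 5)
Your proposal is correct and is precisely the standard ``pair extension'' (Lindenbaum-style) argument that the paper's proof invokes by citation to Restall's Theorem 5.17, with the distribution-based gluing step being the crux in the weak relevant setting; you simply spell out what the paper leaves implicit, and you absorb the degenerate cases ($\Gamma=\emptyset$ or $\Delta=\emptyset$), which the paper treats separately by taking $\Sigma=\emptyset$ or $\Sigma=Fm$, into the uniform construction.
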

\begin{proof}
(i) If $\Gamma = \emptyset$, then let $\Sigma := \emptyset$. If $\Gamma \neq \emptyset$ and $\Delta = \emptyset$, then let $\Sigma := Fm$. If both $\Gamma$ and $\Delta$ are non-empty, then use the standard ``prime extension'' argument (cf.~Theorem 5.17 in \cite{Restall2000}, for example).

(ii) This is established similarly as the first claim (note that $\Sigma$ needs to be proper if $\Delta \neq \emptyset$). In fact, this is the well-known Lindenbaum Lemma.
\end{proof}

A $\mathsf{(C)L}$-theory $\Gamma$ is \emph{maximal $\mathsf{(C)L}$-consistent} iff it is a proper $\mathsf{(C)L}$-theory such that each $\mathsf{(C)L}$-theory $\Delta \supsetneq \Gamma$ is not proper. It is easily shown that a $\mathsf{CL}$-theory is non-empty, proper and prime iff it is maximal $\mathsf{CL}$-consistent. It is also easily shown that if $\Gamma$ is a non-empty proper prime $\mathsf{CL}$-theory, then $Th(\mathsf{CL}) \subseteq \Gamma$.

\begin{definition}
The \emph{canonical $\ax{CL}$-frame} is the structure $$\rs{F}^{\ax{CL}} = (S^{\ax{CL}}, \leq^{\ax{CL}}, W^{\ax{CL}}, R^{\ax{CL}}, *^{\ax{CL}}, Q^{\ax{CL}}, Q_L^{\ax{CL}})$$ where
 \begin{itemize}
 \item $S^{\ax{CL}}$ is the set of all prime $\ax{L}$-theories;
 \item $\leq^{\ax{CL}}$ is set inclusion;
 \item $W^{\ax{CL}}$ is the set of all non-empty proper prime $\ax{CL}$-theories;
 \item $R^{\ax{CL}} stu$ iff $\forall \f, \ff$, if $\f \to \ff \in s$ and $\f \in t$, then $\ff \in u$;
 \item $s^{*^{\ax{CL}}} = \{ \f \mid \neg\f \not\in s \}$;
 \item $Q^{\ax{CL}}st$ iff, $\forall\f$, $\Box \f \in s$ only if $\f \in t$.
  \item $Q^{\ax{CL}}_L st$ iff, $\forall \f$, $\Box_L \f \in s$ only if $\f \in t$;
 \end{itemize}
 The \emph{canonical $\ax{CL}$-model} is $\rs{M}^{\ax{CL}} = (\rs{F}^{\ax{CL}}, V^{\ax{CL}})$ where $V^{\ax{CL}} : Pr \to 2^{S^{\ax{CL}}}$ such that $V^{\ax{CL}}(p) = \{ s \in S^{\ax{CL}} \mid p \in s \}$.
\end{definition}

Canonical models are well defined since, for each $\mathsf{L}$, any prime $\mathsf{CL}$-theory is a prime $\mathsf{L}$-theory (note that if $\vdash_{\mathsf{L}} \f \to \ff$, then $\vdash_{\mathsf{CL}} \Box_L (\f \to \ff)$ by Lemma \ref{lem:L-CL} and then $\vdash_{\mathsf{CL}} \f \to \ff$ using the Bridge Rule).

Note that $\rs{M}^{\mathsf{CL}}$ is defined almost exactly as the canonical $\mathsf{L}$-model $\rs{M}^{\mathsf{L}}$ (cf.~\cite{Fuhrmann1988,Fuhrmann1990}); the only difference is that in $\rs{M}^{\mathsf{L}}$ we have $L^{\mathsf{L}} = \{ s \mid Th(\mathsf{L}) \subseteq s \}$ instead of $W^{\mathsf{CL}}$. It follows that $\rs{M}^{\mathsf{CL}}$ automatically satisfies all the $L$-frame conditions $\Phi$ that are identical to their C-variant and are satisfied by $\rs{M}^{\mathsf{L}}$.

\begin{lemma}\label{lem:canonical}
For all $\mathsf{L}$, the structure $\rs{M}^{\ax{CL}}$ is a $\logic{CL}$-model.
\end{lemma}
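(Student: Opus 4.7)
The plan is to verify each structural requirement of a $\logic{CL}$-model for $\rs{M}^{\ax{CL}}$ in turn: that $\rs{F}^{\ax{CL}}$ is a bounded frame with the right tonicity, that every $w \in W^{\ax{CL}}$ satisfies the possible-world clauses \eqref{eq:world1}-\eqref{eq:world4}, that \eqref{eq:Q_L-s<t}-\eqref{eq:Q_L-ss} hold, that the C-variants of the $\logic{L}$-conditions hold, and that $V^{\ax{CL}}$ is admissible. Since $S^{\ax{CL}} = S^{\ax L}$ and $R^{\ax{CL}}, *^{\ax{CL}}, Q^{\ax{CL}}, Q_L^{\ax{CL}}$ are defined exactly as in Fuhrmann's canonical model for $\ax L$, the tonicity verifications and all $\logic{L}$-frame conditions not mentioning $L$ can simply be inherited from \cite{Fuhrmann1988,Fuhrmann1990}.

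I would begin by setting $0 := \emptyset$ and $1 := Fm$ (both vacuously prime $\ax L$-theories) and reading off \eqref{eq:1*}-\eqref{eq:R1st} from the definitions; admissibility of $V^{\ax{CL}}$ is immediate. For the possible-world clauses, fix $w \in W^{\ax{CL}}$: by the observation following Lemma \ref{lem:extension}, $w$ is maximal $\ax{CL}$-consistent and hence contains $Th(\mathsf{CPC})$ and is classically prime. Then $w^* = w$ is classical primality/consistency; $R^{\ax{CL}} www$ uses the CPC-theorem $(\f \to \ff) \land \f \to \ff$; and \eqref{eq:world3}-\eqref{eq:world4} use the CPC-theorems $\ff \to (\f \to \ff)$ and $\neg\f \to (\f \to \ff)$, respectively. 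Condition \eqref{eq:Q_L-s<t} is similar: from $\vdash_{\ax L} \f \to \f$ and Lemma \ref{lem:L-CL}, $\Box_L(\f \to \f) \in Th(\ax{CL}) \subseteq w$, so $Q_L^{\ax{CL}} wu$ forces $\f \to \f \in u$, and then $R^{\ax{CL}} ust$ with $\f \in s$ gives $\f \in t$.

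The main obstacle is \eqref{eq:Q_L-ss}: given $s \in S^{\ax{CL}}$, produce $w \in W^{\ax{CL}}$ and $u \in S^{\ax{CL}}$ with $Q_L^{\ax{CL}} wu$ and $R^{\ax{CL}} uss$. Fuhrmann's canonical-model argument establishing \eqref{eq:L1} for $\ax L$ yields a prime $\ax L$-theory $u \supseteq Th(\ax L)$ with $R^{\ax L} uss$, and this is the desired $u$. For $w$ I would consider $X := Th(\ax{CL}) \cup \{\neg \Box_L \f : \f \notin u\}$ and show it is $\ax{CL}$-consistent. Suppose otherwise; by classical logic this entails $\vdash_{\ax{CL}} \Box_L \f_1 \lor \ldots \lor \Box_L \f_n$ for some $\f_1,\ldots,\f_n \notin u$. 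Iterated use of (a6)-(a7) and hypothetical syllogism (admissible from (a1) and (Aff)) yields $\vdash_{\ax L} \f_i \to (\f_1 \lor \ldots \lor \f_n)$ for each $i$; applying $(\Box_L\text{-Mon})$ and then transferring to $\ax{CL}$ via Lemma \ref{lem:L-CL} and the Bridge Rule gives $\vdash_{\ax{CL}} \Box_L \f_i \to \Box_L(\f_1 \lor \ldots \lor \f_n)$, so by classical disjunction elimination $\vdash_{\ax{CL}} \Box_L(\f_1 \lor \ldots \lor \f_n)$. Lemma \ref{lem:L-CL} then gives $\vdash_{\ax L} \f_1 \lor \ldots \lor \f_n \in Th(\ax L) \subseteq u$, and primality of $u$ delivers some $\f_i \in u$, contradicting $\f_i \notin u$. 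Lemma \ref{lem:extension}(ii), applied to $(X, \{p \land \neg p\})$ for any atom $p$, then yields a non-empty proper prime $\ax{CL}$-theory $w \supseteq X$, i.e., $w \in W^{\ax{CL}}$; maximal $\ax{CL}$-consistency of $w$ together with $\{\neg\Box_L \f : \f \notin u\} \subseteq w$ forces $Q_L^{\ax{CL}} wu$.

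For the three $\logic{L}$-conditions that do mention $L$ --- (X), (ER), (Nec) --- the strategy mirrors the one above: each C-variant is verified by combining the $\Box_L$-version of the corresponding axiom or rule (which is in $\ax{CL}$ by construction) with an extension-lemma argument. For (C-X), say, $\Box_L(p \lor \neg p)$ is an axiom of $\ax{CL}$, so $\Box_L(\f \lor \neg \f) \in w$ for every $w \in W^{\ax{CL}}$; hence $\f \lor \neg\f \in s$ whenever $Q_L^{\ax{CL}} ws$, and primality of $s$ yields $s^{*^{\ax{CL}}} \subseteq s$. (C-ER) and (C-Nec) are handled analogously, each time pairing a Fuhrmann-style prime extension (to build the required $\ax L$-theory) with a $W$-extension step of the kind used for \eqref{eq:Q_L-ss}.
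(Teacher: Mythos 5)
Your proposal is correct and follows essentially the same route as the paper's own proof: the same choice of $0=\emptyset$, $1=Fm$, the same CPC-theorems for the possible-world clauses, the same two-stage extension argument for \eqref{eq:Q_L-ss} (a prime $\ax{L}$-theory $u$ with $Russ$ followed by a maximal $\ax{CL}$-theory $w$ with $Q_L wu$, ruling out $\vdash_{\ax{CL}}\bigvee_i\Box_L\f_i$ via primality of $u$), and the same treatment of (C-X), (C-ER) and (C-Nec). The only differences are presentational, e.g.\ you phrase the key step as consistency of $Th(\ax{CL})\cup\{\neg\Box_L\f : \f\notin u\}$ where the paper shows $\ax{CL}$-independence of the pair $\big(Th(\ax{CL}),\{\Box_L\f : \f\notin u\}\big)$, which is equivalent given that $\ax{CL}$ contains classical propositional logic.
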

\begin{proof}
First we have to show that $\rs{M}^{\mathsf{CL}}$ is a $W$-model, that is, (i) relations $R^{\ax{CL}}, *^{\ax{CL}}, Q^{\ax{CL}}$ and $Q_L^{\ax{CL}}$ satisfy the required tonicity conditions, (ii) $\rs{M}^{\mathsf{CL}}$ is bounded, (iii) $W$ is a set of possible worlds and (iv) conditions \eqref{eq:Q_L-s<t}-\eqref{eq:Q_L-ss} are satisfied. Then we have to show that (v) $\rs{M}^{\ax{CL}}$ satisfies the $\logic{CL}$ frame conditions. In the remainder of the proof, we will mostly omit the superscript $\mathsf{CL}$.

Claim (i) follows easily from the definitions. Claim (ii) is established as follows. Let $0 := \emptyset$ and $1 := Fm$; both are prime $\mathsf{L}$-theories. Then $0$ is the least element (with respect to set inclusion) of $S$ and $1$ is the greatest element. Next, $1^{*} = \{ \f \mid \neg \f \not\in 1 \} = \emptyset = 0$ and $0^{*} = \{ \f \mid \neg\f \not\in 0 \} = Fm = 1$. Next, clearly $Q_{(L)} 00$ for all $s$. Next, if $Q_{(L)} 1 s$ and $s \neq 1$, then there is $\f \notin s$ and so $\Box_{(L)} \f \notin 1$, which is a contradiction; hence, $s = 1$. Next, $R010$ by the definition of $R$. Finally, if $R 1st$, $s \neq 0$ and $t \neq 1$, then there is $\f \in s$ and $\ff \notin t$ such that $\f \to \ff \in 1$, which contradicts the definition of $R$.

(iii) We know that non-empty proper prime $\mathsf{CL}$-theories are maximal $\mathsf{CL}$-consistent theories, i.e.~$\f \in s$ iff $\neg \f \notin s$. Hence $w^{*} = \{ \f \mid \neg\f \notin w \} = w$. The fact that $Rwww$ follows from the fact that $\vdash_{\mathsf{CPC}} (\f \land (\f \to \ff)) \to \ff$. Now assume that $Rwst$ and $s \neq 0$; we have to prove that $w \subseteq t$. Thus assume $\f \in w$ and $\ff \in s$. Since $\vdash_\mathsf{CPC} \f \to (\ff \to \f)$, we have $\ff \to \f \in w$, and so $\f \in w$ by the definition of $R$. Since $\f$ is arbitrary, we established $w \subseteq t$. Finally, assume $Rwst$ and $t \neq 1$; we have to prove that $s \subseteq w^{*}$. Hence, assume that $\neg \f \in w$, $\ff \notin t$ and, towards a contradiction, that $\f \in s$. Since $\vdash_{\mathsf{CPC}} \neg \f \to (\f \to \ff)$, we have $\f \to \ff \in w$ and so $\ff \in t$ by the definition of $R$. This is a contradiction; hence, if $\f \in s$, then $\neg\f \not\in w$, meaning in general that $s \subseteq w^{*}$.\footnote{Note that $Rwst \To w \subseteq t$ does not hold since $Rw0t$ for all $w$ and $t$. Similarly, $Rwst \To s \subseteq w^{*}$ does not hold since $Rws1$ for all $w$ and $s$.}\label{ft:01canoncial}

(iv) Note that \eqref{eq:Q_L-s<t} follows from the fact that $\Box_L (\f \to \f)$ is a theorem of $\mathsf{CL}$. Now we prove that the canonical frame satisfies \eqref{eq:Q_L-ss}. Take any $s \in S$; we will prove that there are $w \in W$ and $t \in S$ such that $Q_L wt$ and $Rtss$. If $s = 1$, then $Q_L ws$ for all $w \in W$ and $Rsss$ and so we are done. If $s = 0$, then $Q_L w 1$ for all $w \in W$ and $R 1 ss$ and we are done. Finally, assume that $s \neq 1$ and $s \neq 0$. First, it is easily shown that the pair $\big( Th(\ax{L}), \{ \f \to \ff \mid \f \in s \And \ff \notin s \}\big)$ is $\ax{L}$-independent and so, by the Extension Lemma \ref{lem:extension}, there is a non-empty prime $\ax{L}$-theory $t$ such that $Rtss$.
 If $t = 1$, then also $Q_L w t$ for all $w \in W$ and we are done. If $t \neq 1$, then we can show that the pair
$$\big( Th(\ax{CL}), \{ \Box_L \f \mid \f \notin t \} \big)$$ is $\ax{CL}$-independent. If it were not, then
\begin{itemize}
\item $\vdash_{\ax{CL}} \bigvee_{i < n} \Box_L \f_i$ for some $n > 0$, and so
 \item $\vdash_{\ax{CL}} \Box_L \bigvee_{i < n} \f_i$ by the properties of $\to$ in $\mathsf{CL}$ (namely, $\vdash_{\mathsf{CL}} \Box_L \f \lor \Box_L \ff \to \Box_L (\f \lor \ff)$), hence
 \item $\vdash_{\ax{L}} \bigvee_{i < n} \f_i$ by Lemma \ref{lem:L-CL}, so
 \item $\bigvee_{i < n} \f_i \in t$ by the construction of $u$, and so
 \item $\f_i \in t$ for some $i < n$ since $u$ is prime.
 \end{itemize}
 This is a contradiction, so the pair is $\ax{CL}$-independent. Note also that both sets in the pair are non-empty (since, recall, $t \neq 1$). It follows using the Extension Lemma \ref{lem:extension} that there is a non-empty maximal consistent $\ax{CL}$-theory $w$ such that $Q_L wt$. 

(v) Assume first that $\Phi$ is a $L$-frame condition that is identical to its C-variant. As noted above, $\rs{F}^{\mathsf{CL}}$ has $\Phi$ iff $\rs{F}^{\mathsf{L}}$ has $\Phi$. However, the fact that that $\rs{F}^{\mathsf{L}}$ has $\Phi$ follows from the well-know result that all $\mathsf{L}$ considered in this paper are canonical \cite{Fuhrmann1988,Fuhrmann1990,RoutleyEtAl1982}. We will omit the details.

Assume now that $\Phi$ is a $L$-frame condition such that the C-variant $\Phi'$ is not identical to $\Phi$. We reason by cases. Take (C-X) $s \in Q_L(W) \To s^{*} \leq s$. If $\mathsf{L}$ contains axiom $p \lor \neg p$, then $\mathsf{CL}$ contains axiom $\Box_L (p \lor \neg p)$. Hence, if $Q_L ws$ for some $w \in W$, then $\f \lor \neg \f \in s$ for all $\f$. Now assume that $\ff \in s^{*}$. Hence, $\neg \ff \not\in s$ and so $\ff \in s$. In general, $s^{*} \subseteq s$ as we wanted to show.  

Now take (C-ER), $\forall s \exists w, t ( w \in W \And Q_L wt \And  Rsts)$. Fix any $s \in S$. We first construct a prime $\mathsf{L}$-theory $t$ such that $Rsts$. In order to do this, we show that the pair $$ \big ( Th(\mathsf{L}), \{ \f \mid \exists \ff (\f \to \ff \in s \And \ff \not\in s) \} \big)$$ is $\mathsf{L}$-independent if  (ER) is an admissible rule of $\mathsf{L}$. If the pair were not $\mathsf{L}$-independent, then
\begin{itemize}
\item $\vdash_{\mathsf{L}} \bigvee_{i < n} \f_i$ for some $n > 0$, which entails that
\item $\vdash_{\mathsf{L}} (\bigvee_{i < n} \f_i \to \bigvee_{i < n} \ff_i) \to \bigvee_{i < n} \ff_i$ using (ER) where, for each $i < n$, $\ff_i$ is a formula such that $\f_i \to \ff_i \in s$ and $\ff_i \not\in s$; but this entails that
\item $\bigvee_{i < n} \ff_i \in s$ since $\bigwedge_{i < n} (\f_i \to \ff_i) \in s$ and so $(\bigvee_{i < n} \f_i \to \bigvee_{i < n} \ff_i) \in s$. But this means that
\item $\ff_i \in s$ for some $i < n$.
\end{itemize}
This is a contradiction and so the pair is $\mathsf{L}$-independent. Consequently, there is a prime $\mathsf{L}$-theory $t$ such that $Th(\mathsf{L}) \subseteq t$ and $Rsts$, using the Extension Lemma \ref{lem:extension}. As a second step of the argument, we show that there is a non-empty proper prime $\mathsf{CL}$-theory $w$ such that $Q_L wt$. If $t = 1$, then $Q_L wt$ for all $w \in W$ and we are done. If $t \neq 1$, then we can show that the pair
$$ \big ( Th(\mathsf{CL}), \{ \Box_L \f \mid \f \notin t \} \big)$$ is $\mathsf{CL}$-independent. (The argument is similar to the one establishing \eqref{eq:Q_L-ss}.)
 Note also that both sets in the pair are non-empty (since it is assumed that $t \neq 1$) and so, using the Extension Lemma \ref{lem:extension}, there is a non-empty proper prime $\mathsf{CL}$-theory $w$ such that $Q_L wt$. (C-Nec) is dealt with in a similar fashion.

\end{proof}

\begin{lemma}[Truth Lemma]\label{lem:truth}
For all $\mathsf{L}$ and all $\f$, $(\rs{M}^{\mathsf{CL}}, s) \models \f$ iff $\f \in s$.
\end{lemma}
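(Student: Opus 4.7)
The plan is to proceed by induction on the complexity of $\f$. The base case $\f = p$ is immediate from the definition of $V^{\mathsf{CL}}$. The cases $\f = \psi \land \chi$ and $\f = \psi \lor \chi$ are standard: the former uses that each prime $\ax{L}$-theory is closed under (Adj) and contains each conjunct of any conjunction it contains (via axioms (a4), (a5) together with closure under $\mathsf{L}$-provable implication), while the latter additionally uses primeness in the right-to-left direction. The case $\f = \neg \psi$ follows directly from the definition $s^{*^{\ax{CL}}} = \{ \chi \mid \neg \chi \notin s \}$ together with the induction hypothesis.

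For the implication case $\f = \psi \to \chi$, the left-to-right direction is immediate from the definition of $R^{\ax{CL}}$ and the induction hypothesis. For the converse, assume $\psi \to \chi \notin s$; I need prime $\ax{L}$-theories $t, u$ with $R^{\ax{CL}} stu$, $\psi \in t$, and $\chi \notin u$. The argument is a Belnap-style pair extension. First, build an initial theory $t_0$ containing $\psi$, e.g.\ $t_0 = \{ \chi' \mid \vdash_{\ax{L}} \psi \to \chi' \}$, which is closed under conjunction and $\ax{L}$-provable implication thanks to (a1), (a8) and (Aff). Next, set $u_0 = \{ \chi' \mid \exists \chi'' \in t_0,\ \chi'' \to \chi' \in s \}$ and verify that $u_0$ is a theory with $R^{\ax{CL}} s t_0 u_0$ and $\chi \notin u_0$ (the latter directly from $\psi \to \chi \notin s$ together with the closure properties of $s$ and $t_0$). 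Finally, invoke the Extension Lemma \ref{lem:extension} twice: extend $u_0$ to a prime $\ax{L}$-theory $u$ still avoiding $\chi$, and then extend $t_0$ to a prime $\ax{L}$-theory $t$ while preserving $R^{\ax{CL}} s t u$; the second extension rests on a secondary independence check that is by now standard in the relevant logic literature \cite{RoutleyEtAl1982}.

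For the modal cases, the left-to-right directions are immediate from the definitions of $Q^{\ax{CL}}$ and $Q_L^{\ax{CL}}$. For the converse in the $\Box$ case, assume $\Box \psi \notin s$ and consider the pair $\bigl( \{ \chi \mid \Box \chi \in s \}, \{ \psi \} \bigr)$. I claim this pair is $\ax{L}$-independent: if $\vdash_{\ax{L}} \bigwedge_{i \leq n} \chi_i \to \psi$ with each $\Box \chi_i \in s$, then by axiom (a11) and the rule ($\Box$-Mon) we get $\vdash_{\ax{L}} \bigwedge_{i \leq n} \Box \chi_i \to \Box \psi$, whence $\Box \psi \in s$ since $s$ is an $\ax{L}$-theory closed under (Adj)—contradicting the assumption. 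The Extension Lemma then yields a prime $\ax{L}$-theory $t$ with $Q^{\ax{CL}} s t$ and $\psi \notin t$, and IH finishes the case. The $\Box_L$ case is entirely analogous, using (a12) and ($\Box_L$-Mon) in place of (a11) and ($\Box$-Mon).

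The main obstacle is the implication case, where one must coordinate the two prime extensions so that $R^{\ax{CL}} s t u$ is preserved while $\psi \in t$ and $\chi \notin u$ remain intact. The remaining cases reduce cleanly to closure properties of prime $\ax{L}$-theories combined with the canonical-frame definitions, and follow the familiar pattern of completeness proofs for relevant modal logic in \cite{Fuhrmann1988,Fuhrmann1990,RoutleyEtAl1982}.
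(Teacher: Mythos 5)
Your proof is correct and follows essentially the same route as the paper, which handles the Truth Lemma by the standard induction, deferring the propositional cases to the usual relevant-logic pair-extension arguments and the modal cases to the usual argument via monotonicity and regularity of the boxes in $\mathsf{L}$ (since $s$ ranges over prime $\mathsf{L}$-theories). You have simply filled in the details that the paper cites from \cite{Restall2000,RoutleyEtAl1982,Fuhrmann1988,Fuhrmann1990}.
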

\begin{proof}
Induction on the complexity of $\f$. The base holds by definition and the cases of the induction step where the main connective is propositional are established as usual in relevant logic \cite{Restall2000,RoutleyEtAl1982}. The cases where the main connective is $\Box$ or $\Box_L$ are established as usual in relevant modal logic \cite{Fuhrmann1988,Fuhrmann1990} using monotonicity and regularity of the box operators (in $\mathsf{L}$ since the claim of the lemma concerns an arbitrary prime $\mathsf{L}$-theory). 
\end{proof}

\begin{theorem}
For all $\mathsf{L}$, $Th(\mathsf{CL}) = \logic{CL}$.
\end{theorem}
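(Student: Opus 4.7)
The plan is a standard soundness-plus-completeness argument, where the real work has already been concentrated in Lemmas~\ref{lem:CL_sound}, \ref{lem:canonical}, and \ref{lem:truth}. The inclusion $Th(\mathsf{CL}) \subseteq \logic{CL}$ is exactly the soundness statement of Lemma~\ref{lem:CL_sound}, so nothing more is needed there.

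For the converse inclusion $\logic{CL} \subseteq Th(\mathsf{CL})$, I would argue by contraposition: assume $\f \notin Th(\mathsf{CL})$ and construct a $\logic{CL}$-model and a designated world falsifying $\f$. The natural candidate is the canonical model $\rs{M}^{\mathsf{CL}}$. First I would show that the pair $(Th(\mathsf{CL}), \{\f\})$ is $\mathsf{CL}$-independent: if it were not, there would be finitely many theorems $\ff_1, \ldots, \ff_n \in Th(\mathsf{CL})$ with $\vdash_{\mathsf{CL}} \bigwedge_{i \leq n} \ff_i \to \f$, and repeated use of (Adj) together with (MP) in $\mathsf{CPC}$ would yield $\vdash_{\mathsf{CL}} \f$, contradicting the assumption. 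Both components of the pair are non-empty (the first contains, e.g., $p \to p$; the second contains $\f$), so part~(ii) of the Extension Lemma~\ref{lem:extension} produces a non-empty proper prime $\mathsf{CL}$-theory $w$ with $Th(\mathsf{CL}) \subseteq w$ and $\f \notin w$.

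By the definition of the canonical frame, such a $w$ belongs to $W^{\mathsf{CL}}$. By Lemma~\ref{lem:canonical}, $\rs{M}^{\mathsf{CL}}$ is a $\logic{CL}$-model, so in particular its set of designated states is $W^{\mathsf{CL}}$. By the Truth Lemma~\ref{lem:truth}, $(\rs{M}^{\mathsf{CL}}, w) \not\models \f$, and since $w \in W^{\mathsf{CL}}$ this witnesses $\rs{M}^{\mathsf{CL}} \not\models \f$. Hence $\f \notin \logic{CL}$, completing the contrapositive.

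There is no serious obstacle at this stage; the delicate work (verifying that the canonical frame satisfies the bounded-frame clauses, is a genuine $W$-frame with respect to \eqref{eq:Q_L-s<t}--\eqref{eq:Q_L-ss}, and meets the C-variant frame conditions) has already been absorbed into Lemma~\ref{lem:canonical}, and the behaviour of the propositional and modal connectives at prime theories has been absorbed into Lemma~\ref{lem:truth}. The only subtlety to flag explicitly is that we must land in $W^{\mathsf{CL}}$ rather than just in $S^{\mathsf{CL}}$, which is why we apply part~(ii) of the Extension Lemma with $\Gamma = Th(\mathsf{CL})$ (non-empty) rather than part~(i).
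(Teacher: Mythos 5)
Your proposal is correct and follows essentially the same route as the paper, which likewise derives the theorem by combining the soundness lemma (Lemma~\ref{lem:CL_sound}) with the Extension Lemma, the canonicity lemma (Lemma~\ref{lem:canonical}), and the Truth Lemma (Lemma~\ref{lem:truth}); your write-up simply makes explicit the independence of $(Th(\mathsf{CL}), \{\f\})$ and the need to land in $W^{\mathsf{CL}}$, which the paper leaves implicit.
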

\begin{proof}
Soundness is established as Lemma \ref{lem:CL_sound}. Completeness follows from Lemmas \ref{lem:extension}, \ref{lem:canonical} and \ref{lem:truth}.
\end{proof}

\section{Conclusion}

In this paper we studied a framework for epistemic logic that avoids the logical omniscience problem by introducing non-standard states, but where the epistemic attitudes of agents are regimented by a relevant logic. Unlike the classic non-standard-states approaches of Levesque \cite{Levesque1984} and Lakemeyer \cite{Lakemeyer1987}, the relevant logic regimenting attitudes is not the $\land,\lor,\neg$-fragment of $\logic{E}$, but any relevant modal logic from a wide variety of logics previously studied by Fuhrmann \cite{Fuhrmann1988,Fuhrmann1990}, for example. Hence, our approach provides a more realistic relevant formalization of attitudes towards implicational statements, which is an improvement with respect to  the approach of Fagin et al.~\cite{FaginEtAl1995a} as well. In comparison to earlier work on relevant epistemic logic \cite{BilkovaEtAl2016,BilkovaEtAl2010,Sedlar2015,Sedlar2016a}, our framework combines relevant modal logic with classical propositional logic and it uses natural Hilbert-style proof theory without unnecessary linguistic assumptions. The main technical result of the paper is a modular completeness theorem, parametrized by the relevant modal logic governing the agents' reasoning. 

There is a number of issues to pursue in the future. Firstly, it would be interesting to consider a version of our framework without the assumption of conjunctive regularity $\Box \f \land \Box \ff \to \Box (\f \land \ff)$. Such a framework needs to be based on neighborhood semantics. Secondly, it would be interesting to look at extensions of our logics with operators representing group epistemic notions such as common and distributed belief, or dynamic phenomena such as public announcements. A combination of our framework with the approach of \cite{PuncocharSedlar2021a} is an option to consider. Thirdly, it is tempting to consider a first-order version of the present framework. 

\medskip
\paragraph{Acknowledgement.} Igor Sedlár acknowledges the support of the long-term strategic development financing of the Institute of Computer Science (RVO:67985807) and the support of the Czech Science Foundation
project GA22-01137S. We thank Nicholas Ferenz for valuable comments on an earlier version of the paper. We are grateful to three anonymous reviewers for AiML for a number of useful suggestions. The paper was presented at the workshop \emph{Relevant Logic Today} organized by Peter Verdée at UCLouvain in May 2022. We are grateful to the audience at the workshop for helpful discussion.
 

\appendix
\section{Technical appendix}
This technical appendix contains the proof of Proposition \ref{prop:+}.

\medskip

\noindent
\textbf{Proposition \ref{prop:+}}. 
\textit{
For each $L$-model $\rs{M}$ for $\logic{L}$ with a set of states $S$ there is a $W$-model $\rs{M'}$ for $\logic{CL}$ with the set of states $S' \supsetneq S$ such that, for all $\f \in Fm_{\mathcal{L}}$:
\begin{enumerate}
\item for all $s \in S$, $(\rs{M}, s) \models \f$ iff $(\rs{M'}, s) \models \f$;
\item if $\rs{M} \not\models \f$, then $\rs{M'} \not\models \Box_L \f$.
\end{enumerate}
}

\noindent
\begin{proof}
Let $\relstruct{M} = (S, \leq, L , R, *, Q, Q_L, V)$ be a $L$-model. The structure $\relstruct{M^{+}} = (S^{+}, \leq^{+}, W^{+}, R^{+}, *^{+}, Q^{+}, Q^{+}_L, V^{+})$ is defined as follows:
\begin{align*}
    &S^{+} = S \cup \{w, 0, 1 \} \\
    &\mathord{\leq^{+}} = \mathord{\leq} \cup \{(w,w)\} \cup \{ (s, 1) \mid s \in S^{+} \} \cup \{ (0, s) \mid s \in S^{+} \} \\
    &W = \{w\}\\
    &R^{+} = R \cup \{(w,w,w)\} \cup  \{ (0, s, t), (s, 0, t), (s, t, 1) \mid s,t \in S^{+} \} \\
    &*^{+} = * \cup \{(w,w)\} \cup \{ (0, 1), (1, 0) \} \\
    &Q^{+} = Q \cup \{(w,w)\} \cup \{ (s, 1) \mid s \in S^{+} \} \cup \{ (0, s) \mid s \in S^{+} \} \\
    &Q_L^{+} = Q_L \cup \{(w,w)\} \cup \{(w,s) \mid s \in L \} \cup \{ (s, 1) \mid s \in S^{+} \} \cup \{ (0, s) \mid s \in S^{+} \} \\
    &V^{+} (p) = V(p) \cup \{ 1 \} \text{ for all } p
\end{align*}
We prove first that for all $\rs{M}$, the structure $\rs{M^{+}}$ is a $W$-model (Claim \ref{claim:+1}), then we prove that for all $s \in S$ and all $\f \in Fm_{\mathcal{L}}$, $(\rs{M}, s) \models \f$ iff $(\rs{M^{+}}, s) \models \ff$ (Claim \ref{claim:+2}). It follows from the construction of $\rs{M^{+}}$ and Claim \ref{claim:+2} that if $\rs{M} \not\models \f$, then $\rs{M^{+}} \not\models \Box_L \f$ (since $(\rs{M}, s) \not\models \f$ for some $s \in L$ implies that $(\rs{M^{+}}, w) \not\models \Box_L \f$). Finally, we prove that if $\rs{M}$ is an $\logic{L}$-model, then $\rs{M^{+}}$ is a $\logic{CL}$-model (Claim \ref{claim:+3}).

\begin{claim}\label{claim:+1}
For all $\rs{M}$, the structure $\rs{M^{+}}$ is a $W$-model.
\end{claim}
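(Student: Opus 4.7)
The plan is to verify, in order, the requirements that make $\rs{M^+}$ a $W$-model: (a) $(S^+, \leq^+)$ is a bounded poset with least element $0$ and greatest element $1$; (b) the tonicity conditions $R^+ \in S^+(\downarrow\downarrow\uparrow)$, $*^+ \in S^+(\uparrow, S^+(\downarrow))$, and $Q^+, Q_L^+ \in S^+(\downarrow\uparrow)$; (c) the bounded-frame clauses \eqref{eq:1*}--\eqref{eq:R1st}; (d) that $w$ is a possible world, i.e.\ \eqref{eq:world1}--\eqref{eq:world4}; (e) the $W$-frame conditions \eqref{eq:Q_L-s<t}--\eqref{eq:Q_L-ss}; and (f) the constraint on $V^+$, which holds directly since $V(p) \subseteq S$ and $0, 1 \notin S$.

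Items (a)--(c) are largely a matter of reading the construction. The new elements $w$, $0$, $1$ are pairwise distinct and disjoint from $S$, so the added clauses for $\leq^+, R^+, *^+, Q^+, Q_L^+$ do not interact with the original relations except at the explicitly added tuples. The bounded-frame clauses \eqref{eq:1*}, \eqref{eq:Q_L2}, \eqref{eq:010}, \eqref{eq:R1st} can be read off the definition; \eqref{eq:Q_L3} uses the fact that the only way $(1, s)$ appears in $Q^+$ or $Q_L^+$ is via the added clause $(s', 1)$ with first coordinate $1$, which forces $s = 1$. The tonicity checks in (b) split by which clause generates a given tuple: the original-relation tuples inherit tonicity from $\rs{M}$, and each added clause is preserved under the required moves because the coordinate we want to move is either already at the extremum ($0$ or $1$) in the right direction, or the moved tuple falls back into the same added clause. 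For (d), $w^{*^+} = w$ and $R^+ www$ are in the construction by design; \eqref{eq:world3} and \eqref{eq:world4} follow by inspecting each clause producing $(w, s, t) \in R^+$, which is either $(w,w,w)$ itself, or one of the added clauses $(0, \cdot, \cdot)$, $(\cdot, 0, \cdot)$, $(\cdot, \cdot, 1)$, respectively forcing $s = 0$ or $t = 1$.

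The main case to argue carefully is (e). For \eqref{eq:Q_L-s<t}, I would fix $s, t, u$ with $Q_L^+ w u$ and $R^+ u s t$ and split on the three routes by which $(w, u) \in Q_L^+$: $u = w$, $u \in L$, or $u = 1$. The case $u \in L$ is where the original $L$-condition \eqref{eq:L2} on $\rs{M}$ does the real work for $(u, s, t) \in R$; the remaining sub-cases collapse to added clauses that immediately force $s \leq^+ t$. For \eqref{eq:Q_L-ss}, given $s \in S^+$ I exhibit a witness $u$: for $s \in S$, the $L$-frame condition \eqref{eq:L1} supplies some $x \in L$ with $Rxss$, and then $Q_L^+ w x$ holds by the added clause $\{(w, s') \mid s' \in L\}$; for $s \in \{w, 0, 1\}$, I use $u = w$, $u = 1$, $u = 1$ respectively, each of which lies in $Q_L^+(w)$ and satisfies $R^+ u s s$ by an added triple (for $s = 0$ via $(1, 0, 0)$ as $(s', 0, t')$; for $s = 1$ via $(1, 1, 1)$ as $(s', t', 1)$).

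The principal obstacle is nothing mathematically deep, only the bookkeeping of which defining clause of each relation is being invoked in each sub-case. To keep the case analysis from ballooning, I would organize the write-up by relation rather than by frame condition, treating $\leq^+$, then $R^+$, $*^+$, $Q^+$, $Q_L^+$ in turn, and within each relation enumerating the original-tuple case, the $w$-clause, the $0$-clauses and the $1$-clauses uniformly; then conditions (c)--(e) can be dispatched by pointing back to the relevant clause.
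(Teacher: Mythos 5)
Your proposal is correct and follows essentially the same route as the paper's proof: verify tonicity and boundedness by clause-by-clause inspection of the added tuples, check that $w$ is a possible world, and reduce \eqref{eq:Q_L-s<t} and \eqref{eq:Q_L-ss} to the $L$-frame conditions \eqref{eq:L2} and \eqref{eq:L1} via the clause $\{(w,s)\mid s\in L\}\subseteq Q_L^+$. The only differences are cosmetic (organizing by relation rather than by condition, and choosing $u=1$ rather than $u=w$ as the witness for \eqref{eq:Q_L-ss} when $s\in\{0,1\}$), and your citations of \eqref{eq:L1} versus \eqref{eq:L2} are in fact more accurate than the paper's.
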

\noindent Firstly, we have to show that $\rs{M^{+}}$ is a model, i.e.~$\leq^{+}$ is a partial order and $R^{+}$, $\ast^{+}$, $Q^{+}$ and $Q_L^{+}$ satisfy the required tonicity conditions. A simple analysis of cases shows that $\leq^{+}$ is a partial order. To show that $R^{+} \in S^{+}(\downarrow\downarrow\uparrow)$, assume for instance that $R^{+}stu$ and $v \leq s$. If $\{ s,t,u \} \subseteq S$, then either $v \in S$ or $v = 0$; in both cases $R^{+}vtu$. If $s \not\in S$ and $s = 0$, then $v = 0$ and we are done since $R^{+}0tu$ for all $t,u$. If $s = w$, then either $s = t = u$ and so either $v = w$ or $v = 0$ and we are done, or $t = 0$ and we are done, or $u = 1$ and we are done. Finally, if $s = 1$, then either $t = 0$ or $u = 1$, but then we are done in both cases. The cases $t \not\in S$ and $u \notin S$ are dealt with similarly. The fact that the other relations satisfy the required tonicity conditions is established by a similar tedious examination of cases.

Secondly, we have to show that $\rs{M^{+}}$ is a bounded model and that $w$ is a possible world. This is easily checked. Thirdly, we have to show that the frame conditions \eqref{eq:Q_L-s<t} and \eqref{eq:Q_L-ss} are satisfied. Note that $Q^{+}_L wu$ iff $u \in L \cup \{ w, 1 \}$. We deal with \eqref{eq:Q_L-s<t} first. If $u \in L$ and $R^{+}ust$, then either $s, t \in S$ and we are done thanks to \eqref{eq:L1}, or $s = 0$ or $t = 1$, and then we are done thanks to the properties of $\leq^{+}$. If $u = w$, then by \eqref{eq:world3} either $s = 0$ in which case we are done or $w \leq^{+} t$. If $w = t$, then either $s = w$ or $s = 0$ and we are done; if $t = 1$, then we are done no matter what $s$ is. Now we check \eqref{eq:Q_L-ss}. If $s \in S$, then there is $u \in L$ such that $Russ$ by \eqref{eq:world4}; hence $R^{+} uss$, but we also know that $Q^{+}_L w u$ and so we are done. If $s \notin S$, then $R^{+}wss$ by easy inspection of cases, but we also know that $Q^{+}_L ww$, and so we are done. This concludes the proof of Claim \ref{claim:+1}.   

\begin{claim}\label{claim:+2}
For all $s \in S$ and all $\f \in Fm_{\mathcal{L}}$, $(\rs{M}, s) \models \f$ iff $(\rs{M^{+}}, s) \models \f$.
\end{claim}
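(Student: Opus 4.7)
The plan is to establish the claim by induction on the complexity of $\varphi$. The base case for $\varphi = p \in Pr$ is immediate from the definition $V^{+}(p) = V(p) \cup \{1\}$: restricting to $s \in S$, we have $s \in V^{+}(p)$ iff $s \in V(p)$. The cases where the outermost connective is $\land$ or $\lor$ are routine, since the corresponding semantic operations are just intersection and union of sets, which commute with the restriction to $s \in S$.

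The cases of $\neg$, $\to$, $\Box$ and $\Box_L$ require checking that the new triples introduced in $R^{+}$, $Q^{+}$, $Q_L^{+}$, and the extended operation $\ast^{+}$, do not spoil satisfaction at original states $s \in S$. The crucial tool is Lemma \ref{lem:0empty1full}, together with Claim \ref{claim:+1}, which guarantees that $\rs{M^{+}}$ is a bounded $W$-model, so $0$ satisfies no formula and $1$ satisfies every formula in $\rs{M^{+}}$.

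For $\varphi = \neg \psi$, observe that for $s \in S$ we have $s^{\ast^{+}} = s^{\ast}$, since the extended $\ast^{+}$ only adds the pair $(w,w)$ and the swap $0 \leftrightarrow 1$; thus $s^{\ast^{+}} \in S$ and the inductive hypothesis applies directly. For $\varphi = \psi \to \chi$, the $R^{+}$-triples $(s,t,u)$ with $s \in S$ split into three kinds: the original $R$-triples in $S$ (handled by IH), triples $(s,0,t)$ (vacuous, since $0 \not\models \psi$), and triples $(s,t,1)$ (trivially fine, since $1 \models \chi$). The cases of $\Box\psi$ and $\Box_L\psi$ are analogous: the only $Q^{+}$- (respectively $Q_L^{+}$-) successors of $s \in S$ beyond the original $Q$- (respectively $Q_L$-) successors are contained in $\{1\}$, which is handled trivially by Lemma \ref{lem:0empty1full}.

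The main obstacle is simply the bookkeeping of these case distinctions for the expanded relations, checking that every genuinely new triple with first component in $S$ is either vacuously satisfied (because its second component is $0$) or trivially satisfied (because its third component is $1$). Once that verification is in place, the inductive step reduces cleanly to the original satisfaction clauses evaluated in $\rs{M}$, and the equivalence follows.
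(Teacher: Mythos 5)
Your proposal is correct and follows essentially the same route as the paper's proof: induction on complexity, observing that $s^{\ast^{+}} = s^{\ast}$ for $s \in S$, and disposing of the genuinely new $R^{+}$-, $Q^{+}$- and $Q_L^{+}$-tuples with first component in $S$ by appeal to Lemma \ref{lem:0empty1full} (second component $0$ vacuous, third component $1$ trivial), relying on Claim \ref{claim:+1} to make that lemma applicable. No gaps.
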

\noindent The proof is by  induction on the complexity of $\vp$. The base case and the induction steps for $\land, \lor$ are trivial. 
$(\relstruct{M}, s) \models \neg\f$ iff $(\relstruct{M}, s^{*}) \not\models \f$ iff $(\relstruct{M^+}, s^{*}) \not\models \f$ (by IH) iff $(\relstruct{M^+}, s^{*^+}) \not\models \f$ (by the fact, easily confirmed by inspection of the definition, that if $s \in S$, then $s^{*} = s^{*^+}$) iff $(\relstruct{M^+}, s) \models \neg \f$. If $(\rs{M}, s) \not\models \f \to \ff$, then there are $t,u \in S$ such that $Rstu$ and $(\rs{M^{+}}, t) \models \f$ and $(\rs{M^{+}}, u) \not\models \ff$ by IH. The rest follows from $R \subseteq R^{+}$. Conversely, if $(\rs{M^{+}}, s) \not\models \f \to \ff$, then there are $t, u \in S^{+}$ such that $R^{+}stu$, $(\rs{M^{+}}, t) \models \f$ and $(\rs{M^{+}}, u) \not\models \ff$. By inspection of the definition of $R^{+}$, either $t,u \in S$, or $t = 0$, or $u = 1$. Since we already know that $\rs{M^{+}}$ is a bounded model, the latter two options are ruled out by Lemma \ref{lem:0empty1full}. Hence, $t, u \in S$ but in this case $Rstu$ and so $(\rs{M}, s) \not\models \f \to \ff$ by IH. If $(\rs{M}, s) \not\models \Box_L \f$, then $(\rs{M^{+}}, s) \not\models \Box_L \f$ by $Q_L \subseteq Q^{+}_L$ and IH. Conversely, if $(\rs{M^{+}}, s) \not\models \Box_L \f$, then there is $t$ such that $Q^{+}_L st$ and $(\rs{M^{+}}, t) \not\models \f$. By inspection of the definition of $Q^{+}_L$ we see that either $t = 1$, which contradicts Lemma \ref{lem:0empty1full}, or $t \in S$, in which case $Q_L st$ and so we are done by IH. The case of $\Box$ is analogous. This concludes the proof of Claim \ref{claim:+2}.  

\begin{claim}\label{claim:+3}
if $\rs{M}$ is an $\logic{L}$-model, then $\rs{M^{+}}$ is a $\logic{CL}$-model.
\end{claim}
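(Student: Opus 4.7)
The plan is to verify, one $\logic{CL}$ frame condition at a time, that $\rs{M^{+}}$ satisfies the C-variant $\Phi'$ of every $\logic{L}$ frame condition $\Phi$ satisfied by $\rs{M}$. I would split the work into two groups: conditions that do not mention $L$ (so $\Phi = \Phi'$), and the three that do, namely (X), (ER) and (Nec). The key preliminary fact, immediate from the definition of $Q_L^{+}$ and $W^{+} = \{w\}$, is that $Q_L^{+}(W^{+}) = L \cup \{w, 1\}$; this is what allows the C-variants to be unpacked uniformly.

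For conditions not mentioning $L$ — (DN), (Cp), (WB), (B), (CB), (W), (C), (M), (Rd), ($\Box$K), ($\Box$T), ($\Box$D), ($\Box$4), ($\Box$5) — I would proceed by case analysis on whether each state variable lies in $S$ or in $\{w, 0, 1\}$. When all states involved lie in $S$, the condition transfers from $\rs{M}$ because all the relations of $\rs{M^{+}}$ agree on $S$ with those of $\rs{M}$. In the remaining cases the conclusion holds essentially by construction: $0$ is $\leq^{+}$-least and $R^{+}0st$, $R^{+}s0t$ hold universally; $1$ is $\leq^{+}$-greatest and $R^{+}st1$ holds universally; and $w$ was set up with $R^{+}www$, $w^{*^{+}} = w$, $Q^{+}ww$ and $Q_L^{+}ww$. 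The only mild bookkeeping appears for conditions involving compound patterns like $Rstuv$ or $Rs(tu)v$, where one must split on the intermediate state introduced by the existential quantifier; in every such case at least one of the states equals $0$ or $1$ and the required triple is then in $R^{+}$ by the universal clauses.

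For the three $L$-mentioning conditions, using $Q_L^{+}(W^{+}) = L \cup \{w, 1\}$: for (C-X) $s \in Q_L^{+}(W^{+}) \To s^{*^{+}} \leq^{+} s$, split on $s \in L$ (original (X) applies and $s^{*^{+}} = s^{*}$), $s = w$ (since $w^{*^{+}} = w$), or $s = 1$ (since $1^{*^{+}} = 0 \leq^{+} 1$). For (C-ER), given any $s \in S^{+}$, if $s \in S$ apply the original (ER) to obtain $x \in L$ with $Rsxs$, and note $Q_L^{+}wx$; if $s \in \{0, 1, w\}$ then $R^{+}sws$ holds by construction and $Q_L^{+}ww$. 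For (C-Nec), given $x \in L \cup \{w, 1\}$ and $Q^{+}xs$, the definition of $Q^{+}$ leaves only finitely many sub-cases: either $x \in L$ and $Qxs$ with $s \in S$, in which case original (Nec) gives $s \in L$; or $s = 1$; or $x = w$ and $s \in \{w, 1\}$. In every case $s \in L \cup \{w, 1\}$ as required. The main obstacle is the administrative volume of this case split rather than any conceptual difficulty; the only point requiring real care is to keep track, in the $L$-mentioning cases, of exactly which elements $Q_L^{+}$ picks out from $W^{+}$.
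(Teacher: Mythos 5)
Your proposal is correct and follows essentially the same route as the paper: the same split into $L$-free and $L$-mentioning frame conditions, the same case analysis on whether the states involved lie in $S$ or among $\{w,0,1\}$, the same transfer argument when everything lies in $S$, and the same use of the universal clauses for $0,1$ and the diagonal clauses for $w$, including the key identification $Q_L^{+}(W^{+}) = L \cup \{w,1\}$ and the treatment of (C-X), (C-ER) and (C-Nec). The one imprecision is the remark that in the compound-pattern cases ``at least one of the states equals $0$ or $1$'': when every state equals $w$ none does, and one must instead use $R^{+}www$ (as the paper does by isolating the subcase $T=\{w\}$) --- but this is already covered by your list of by-construction facts, so it is a slip of phrasing rather than a gap.
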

\noindent The proofs for (B), (CB), (W), (C), (WB) and (M) follow the same strategy, hence we show the details for (B) only. Assume that $R^{+}stx$ and $R^{+}xuv$; we have to prove that there is $y$ such that $R^{+} tu y$ and $R^{+} s y v$. Let $T = \{ s, t, u, v, x \}$. First, if $T \subseteq S$, then $Rstx \And Rxuv$ and so we are done, since (B) holds in $\rs{M}$ and $R \subseteq R^{+}$. Second, if $1 \in T$ or $0 \in T$, then we distinguish three cases:
\begin{enumerate}
\item If $0 \in \{ s, t, u \}$ or $v = 1$, then we are done. (For instance, if $s = 0$, then $R^{+} s 1 v$ and $R^{+} t u 1$; the other cases are similar.)

\item If $x = 0$, then by \eqref{eq:R1st} either $s = 0$ or $t = 0$ and we are in (i). If $x=1$, then by \eqref{eq:R1st} either $u = 0$ or $v = 1$; in both cases we are in (i).  We will use \eqref{eq:R1st} without explicit reference below.

\item If $s=1$, then $t = 0$ (i) or $x = 1$ (ii). If $t = 1$, then $s = 0$ (i) or $x = 1$ (ii). If $u = 1$, then $x = 0$ or $v = 1$ (ii). If $v = 0$, then $x = 0$ (ii) or $u = 0$ (i).
\end{enumerate}
Third, if $T \subseteq S \cup \{ w \}$, then we are either in case (i) or $T = \{ w \}$. In the latter case, set $y = w$ and we are done. These three groups of cases exhaust all possibilities and so $\rs{M^{+}}$ has to satisfy (B) if $\rs{M}$ does.

(DN) and (Rd) are preserved since they hold for $s \in \{ w, 0, 1 \}$ irrespectively of the properties of $\rs{M}$. (Cp) Assume that $R^{+} stu$. If $T = \{ s, t, u \} \subseteq S$, then $\rs{M^{+}}$ satisfies the fame condition if $\rs{M}$ does. If $s = 0$ or $t = 0$ or $u = 1$ then $R^{+} s u^{*^{+}} t^{*^{+}}$ holds by definition of $*^{+}$ and $R^{+}$. The cases where $s = 1$ or $t = 1$ or $u = 0$ reduce to the previous cases. If $T \subseteq S \cup \{ w \}$, then either $T \subseteq S$ or $T = \{ w \}$; we are done in both cases. To prove that if $\rs{M}$ satisfies (X), then $\rs{M^{+}}$ satisfies (C-X) $s \in Q_L(W) \To s^{*} \leq s$, assume $s \in Q^+_L(w)$, i.e. $s=w$ , $s=1$ or $s \in L$. If $s=1$, $0 \leq^+ 1$; if $s=w$ $w \leq^+ w$; and if $s \in L$ then $s^{*} \leq s$ by assumption and so $s^{\ast^+} \leq^+ s$.

To show that $\rs{M^{+}}$ satisfies (C-ER) $\forall s \exists x (x \in Q^{+}_L(w) \And R^{+} s x s)$ if $\rs{M}$ satisfies (ER), we reason as follows. If $s \in \{ w, 0, 1 \}$, then $R^{+} s w s$, and we know that $Q^{+}_L ww$. If $s \in S$, then by (ER) there is $x \in L$ such that $Rsxs$, and so $Q^{+}_L wx$ and $R^{+}s x s$.

To show that $\rs{M^{+}}$ satisfies (C-Nec) $\forall s \forall x (Q^{+}_L wx \And Q^{+} xs) \To Q^{+}_L ws$ if $\rs{M}$ satisfies (Nec), we reason as follows. Assume that $Q^{+}_L wx \And Q^{+} xs$. WE have to prove $Q^{+}_L ws$. If $T = \{ x, s \} \subseteq S$, then $x \in L$ and $Qxs$. Using (Nec), we obtain $s \in L$, which entails that $Q^{+}_L ws$. If $0 \in T$ or $1 \in T$, then we reason as follows. If $x = 0$, then $Q^{+} wx$ entails that $w = 0$, which is a contradiction. If $x = 1$, then $Q^{+} xs$ entails $s = 1$ and then $Q^{+}_L ws$ by definition of $Q^{+}_L$, and we are done. If $s = 0$, then $Q^{+}xs$ entails $x = 0$ which we already know to lead to a contradiction. The only remaining possibility is that $T \subseteq S \cup \{ w \}$. We have already checked the case $T \subseteq S$. If, on the other hand, $w \in T$, then we reason as follows. If $x = w$, then $Q^{+} xs$ entails that either $s = w$ or $s = 1$. In both cases $Q^{+}_L ws$. If $s = w$, then $Q^{+}_L ws$ as before. This exhausts all possibilities and so we are done.

To show that $\rs{M^{+}}$ satisfies ($\Box$K) if $\rs{M}$ does, assume that $R^{+} stx$ and $Q^{+} xu$; we have to prove that there are $y,z$ such that $Q^{+}ty$, $Q^{+}sz$ and $R^{+}zyu$. First, if $T = \{ s, t, u, x \} \subseteq S$, then $Rstx \And Qxu$, and so we are done using the assumption that $\rs{M}$ satisfies ($\Box$K) and $R \subseteq R^{+}$, $Q \subseteq Q^{+}$. Second, if $0 \in T$ or $1 \in T$, then we reason as follows. (i) If $s = 0$ or $t = 0$ or $u = 1$, then we can easily find $y,z \in \{ 0, 1 \}$ such that $Q^{+}ty$, $Q^{+}sz$ and $R^{+}zyu$; (ii) if $x = 0$, then $s = 0$ or $t = 0$, which reduce to case (i), and if $x = 1$, then $u = 1$, which also reduces to case (i); (iii) if $s = 1$, then $t = 0$ (i) or $x = 1$ (ii); if $t = 1$, then $s = 0$ (i) or $x = 1$ (ii); and if $u = 0$, then $x = 0$ (ii). Third, if $T \subseteq S \cup \{ w \}$, then either $T \subseteq S$, which is the first case, or $w \in T$; but then it can be shown that $T = \{ w \}$ from which it follows easily that $Q^{+}tw$, $Q^{+}sw$ and $R^{+}wwu$. This exhausts all possibilities and so we are done.

If $\rs{M}$ satisfies ($\Box$T), then so does $\rs{M^{+}}$ since $Q^{+}ss$ if $s \in \{ w, 0, 1 \}$. ($\Box$D) is dealt with similarly; if $s \in \{ w, 0, 1 \}$ then a suitable $x \in \{ w, 0, 1 \}$ is easily found. To show that ($\Box$4) is satisfied in $\rs{M^{+}}$ if it is satisfied in $\rs{M}$, we reason as follows. Assume that $Q^{+}st$ and $Q^{+}tu$; we have to show that $Q^{+}su$. First, if $T = \{ s, t, u \} \subseteq S$, then $Q^{+}su$ follows from the assumption that ($\Box$4) holds in $\rs{M}$ and $Q \subseteq Q^{+}$. Second, if $0 \in T$ or $1 \in T$, then we reason by cases as follows: (i) if $s = 0$ or $u = 1$, then trivially $Q^{+}su$. (ii) If $s = 1$, then $t = 1$ and so $u = 1$, which brings us back to case (i); if $u = 0$, then $t = 0$ and so $s = 0$, which brings us back to (i). Third, if $T \subseteq S \cup \{ w \}$, then either $T \subseteq S$, in which case we are done, or $T = \{ w \}$, in which case of course $Q^{+}su$. This exhausts all possibilities and so we are done. Preservation of ($\Box$5) is established in a similar way. This concludes the proof of Claim \ref{claim:+3} and Proposition \ref{prop:+}.
\end{proof}

\end{document}